\newtheorem{theorem}{Theorem}
\theoremstyle{definition}
\theoremstyle{remark}
\newtheorem*{rep@theorem}{\rep@title}
\newcommand{\newreptheorem}[2]{%
\newenvironment{rep#1}[1]{%
\def\rep@title{#2 \ref{##1}}%
\begin{rep@theorem}}%
{\end{rep@theorem}}}
\newcommand{\iid}[0]{i.i.d.\xspace}
\newcommand{\One}[1]{{\mathbbm{1}}\left\{{#1}\right\}}
\newcommand{\PP}[1]{\mathbb{P}\left\{{#1}\right\}} % Probability
\newcommand{\Pp}[2]{\mathbb{P}_{#1}\left\{{#2}\right\}} % Probability
\newcommand{\EE}[1]{\mathbb{E}\left[{#1}\right]} % Expectation
\newcommand{\Ep}[2]{\mathbb{E}_{#1}\left[{#2}\right]}
\newcommand{\EEst}[2]{\mathbb{E}\left[{#1}\ \middle| \ {#2}\right]} % Conditional expectation
\newcommand{\PPst}[2]{\mathbb{P}\left\{{#1}\ \middle| \ {#2}\right\}} % Conditional probability
\def\R{\mathbb{R}}
\def\Z{\mathbb{Z}}
\newcommand\independent{\protect\mathpalette{\protect\independenT}{\perp}}
\def\independenT#1#2{\mathrel{\rlap{$#1#2$}\mkern2mu{#1#2}}}
\newcommand{\eps}{\epsilon}
\newcommand{\footremember}[2]{%
    \footnote{#2}
    \newcounter{#1}
    \setcounter{#1}{\value{footnote}}%
}
\newcommand{\footrecall}[1]{%
    \footnotemark[\value{#1}]%
}
\title{The conditional permutation test for independence while controlling for confounders}
\author{\normalsize Thomas B.~Berrett\footremember{Cam}{Statistical Laboratory, University of Cambridge},
Yi Wang\footremember{Chi}{Department of Statistics, University of Chicago},
Rina Foygel Barber\footrecall{Chi} , Richard J.~Samworth\footrecall{Cam}}
\date{\normalsize \today}
\newcommand{\X}{\mathbf{X}}
\newcommand{\x}{\mathbf{x}}
\newcommand{\Y}{\mathbf{Y}}
\renewcommand{\Z}{\mathbf{Z}}
\newcommand{\Xcal}{\mathcal{X}}
\newcommand{\Ycal}{\mathcal{Y}}
\newcommand{\Zcal}{\mathcal{Z}}
\newcommand{\eqd}{\stackrel{\textnormal{d}}{=}}
\newcommand{\tv}{\textnormal{d}_{\textnormal{TV}}}
\newcommand{\kl}{\textnormal{d}_{\textnormal{KL}}}
\begin{document}

\maketitle

\begin{abstract}
We propose a general new method, the \emph{conditional permutation test}, for testing the conditional independence of variables $X$ and $Y$ given a potentially high-dimensional random vector $Z$ that may contain confounding factors. The proposed test permutes entries of $X$ non-uniformly, so as to respect the existing dependence between $X$ and $Z$ and thus account for the presence of these confounders. Like the conditional randomization test of \citet{candes2018panning}, our test relies on the availability of an approximation to the distribution of $X|Z$---while \citet{candes2018panning}'s test uses this estimate to draw new $X$ values, for our test we use this approximation to design an appropriate non-uniform distribution on permutations of the $X$ values already seen in the true data. We provide an efficient Markov Chain Monte Carlo sampler for the implementation of our method, and establish bounds on the Type~I error in terms of the error in the approximation of the conditional distribution of $X|Z$, finding that, for the worst case test statistic, the inflation in Type I error of the conditional permutation test is no larger than that of the conditional randomization test. We validate these theoretical results with experiments on simulated data and on the Capital Bikeshare data set.
\end{abstract}

\section{Introduction}\label{sec:intro}
Independence is a central notion in statistical model building, as well as being a foundational concept for much of statistical theory.  Originating with Francis Galton's work on correlation at the end of the 19th century~\citep{Stigler89}, many measures of dependence have been proposed, including mutual information, the Hilbert--Schmidt independence criterion, and distance covariance~\citep{Cover12,Gretton05,Szekely07}; see also~\citep{josse2013measures} for an overview.  Simultaneously, a great deal of research effort has gone into developing several different tests of independence, for example based on ranks, kernel methods, copulas, and nearest neighbours~\citep{Weihs17,Pfister18,Kojadinovic2009,Berrett17}.  Permutation tests are particularly attractive due to their simplicity and their ability to control the Type~I error (i.e., the false positive rate) without any distributional assumptions.

In practice, it is often conditional independence that is in fact of primary interest \citep{Dawid79}.  For instance, in generalized linear models for a response $Y\in\R$ regressed on a high-dimensional feature vector $(X,Z) = (X,Z^1,\dots,Z^p)\in\R^{p+1}$, the regression coefficient on feature $X$ is zero if and only if $Y$ and $X$ are conditionally independent given the remaining $p$ features, $Z = (Z^1,\dots,Z^p)$. In this paper, we will study the general problem  of testing $X\independent Y|Z$.\footnote{In the regression literature, it is more common to use the notation of regressing $Y$ on $(X^1,\dots,X^p)$, and testing whether the coefficient on feature $X^j$ is zero after controlling for the remaining features $X^{-j}=(X^1,\dots,X^{j-1},X^{j+1},\dots,X^p)$; this $X^j$ and $X^{-j}$ correspond to our $X$ and $Z$, respectively.} We are typically interested in the setting where $X$ and $Y$ are one-dimensional while $Z$ is a high-dimensional set of confounding variables that we would like to control for, but our results are not specific to this setting.

Within standard parametric regression models, conditional independence tests are well-developed; unfortunately, however, they fail to control Type~I error under model misspecification. In fact, the very recent work of \citet{shah2018hardness} has shown that, without placing some assumptions on the joint distribution of $(X,Y,Z)$, conditional testing is effectively impossible---when $(X,Y,Z)$ is continuously distributed, they prove that there is no conditional independence test that both (1) controls Type~I error over any null distribution (i.e., any distribution of $(X,Y,Z)$ with $X\independent Y|Z$), and (2) has better than random power against even one alternative hypothesis.

Our work seeks to complement this fundamental result of \citet{shah2018hardness} by demonstrating that, given some additional knowledge, namely an approximation to the conditional distribution of $X$ given $Z$, one can in fact derive conditional independence tests that are approximately valid in finite samples, and that have non-trivial power.

\subsection{Summary of contributions}
In this paper, we introduce a new method, called the conditional permutation test (CPT), which is inspired by the conditional randomization test (CRT) of \citet{candes2018panning}. The CPT modifies the standard permutation test by using available distributional information to account correctly for the confounding variables $Z$, which leads to a non-uniform distribution over the set of possible permutations $\pi$ on the $n$ observations in our data set, and restores Type~I error control.

Implementing the CPT is a challenging problem since we are sampling from a highly non-uniform distribution over the space of $n!$ permutations, but we propose a Monte Carlo sampler that yields an efficient implementation of the test. We additionally develop theoretical results examining the robustness of both the CPT and the CRT to slight errors in modeling assumptions, proving that Type~I error is only slightly inflated in both tests when our available distributional information is only approximately correct. In fact, in the worst case, Type~I error is always {\em less} inflated for the new CPT method as compared to the CRT. Our empirical results verify the greater robustness of the CPT, while maintaining comparable power in a range of scenarios.

\section{Background}\label{sec:background}
In this section, we briefly summarize several existing approaches to the problem of testing for dependence between $X$ and $Y$ in the presence of confounding variables. Before beginning, it will be helpful to define some brief notation. Throughout, we will assume that the data consists of \iid~data points $(X_i,Y_i,Z_i)\in\Xcal\times\Ycal\times\Zcal$ for $i=1,\dots,n$, and will write $\X=(X_1,\dots,X_n)$, $\Y=(Y_1,\dots,Y_n)$, and $\Z=(Z_1,\dots,Z_n)$.

\subsection{Permutation tests}
One key reason why handling conditional independence in nonparametric contexts is so challenging, is that the permutation approaches that are so effective for testing unconditional independence, $X\independent Y$, cannot be directly applied when we seek to test conditional independence, $X\independent Y|Z$. This is because it may be the case that the null hypothesis $H_0:X\independent Y|Z$ is true, but $X$ and $Y$ are highly marginally dependent due to correlation induced via each variable's dependence on $Z$. Under this null, if we sample a permutation $\pi$ of $\{1,\ldots,n\}$ uniformly at random, then the permuted data set $(X_{\pi(1)},Y_1),\dots,(X_{\pi(n)},Y_n)$ may have a very different distribution from the original data set $(X_1,Y_1),\dots,(X_n,Y_n)$, due to the confounding effect of $Z$.

In certain settings, in particular where $Z$ is categorical, there is a simple and well-known fix for this problem: we can group the observations according to their value of $Z$, and then permute within groups. For example, if $Z\in\{0,1\}$ is binary, we could draw a permutation $\pi$ that permutes the $X_i$'s within the set of indices $\{i:Z_i=0\}$, and separately permutes the $X_i$'s within the set $\{i:Z_i=1\}$. However, this strategy cannot be applied directly in the case where $Z$ is continuously distributed, or where $Z$ is discrete but with few repeated values (note that when $Z$ is high-dimensional, even if it is discrete, each observation $i$ will typically have a unique feature vector $Z_i$). In these settings, it is common to use a binning strategy, where first $Z$ is discretized to fall into finitely many bins, and then the ``permute within groups'' strategy is deployed. However, Type~I error control is no longer guaranteed, since the null hypothesis $H_0:X\independent Y|Z$ does not imply that $X\independent Y|(Z\in\text{bin $b$})$; the best we can usually hope for is that the latter statement would be approximately true under the null. Furthermore, in a high-dimensional setting, choosing these bins can itself be very challenging.

Apart from independence testing, permutation tests are also popular in other settings in which the null hypothesis is exchangeable \citep{ernst2004permutation}. Moreover, \citet{roach2018permutation} develop a theory of generalized permutation tests, primarily in the context of testing simple hypotheses, for non-exchangeable null models where the weights assigned to permutations are non-uniform. %This is mainly in the context of testing simple hypotheses, with an application of the ideas given to testing the global null in the linear model against certain composite alternatives.}

\subsection{The conditional randomization test}\label{sec:CRT}
The conditional randomization test (CRT), proposed by~\citet{candes2018panning}, works in a setting where no assumptions are made about the distribution of the response variable $Y$, but instead, it is assumed that the conditional distribution of $X$ given $Z$ is known. In practice, in semi-supervised learning settings where unlabeled data $(X,Z)$ are easier to obtain than labeled data $(X,Y,Z)$, it may be possible to obtain a very accurate estimate of the conditional distribution of $X|Z$, but testing for independence with $Y$ remains challenging due to limited sample size of the labeled data. \citet[Section 1.3]{candes2018panning} give examples of applications where unlabeled $(X,Z)$ data is amply available while labeled data $(X,Y,Z)$ is scarce---for example, genome-wide association studies (GWAS), where it is important to determine whether a particular genetic variant, $X$, affects a response $Y$ such as 
disease status or some other phenotype, even after controlling for the rest of the genome, encoded in $Z$. Human genome data, i.e., $(X,Z)$ data, is now plentiful,
but labeled data $(X,Y,Z)$ is expensive; if we do not know the disease status $Y$ of the individuals in previously collected samples, 
we need to obtain the $(X,Y,Z)$ samples ourselves.

Assuming then that the distribution of $X|Z$ is known (or is estimated accurately from a large sample of unlabeled data),
the CRT operates by sampling a new copy of the $X$ values in the data set. Letting $Q(\cdot|z)$ denote the distribution of $X$ given $Z=z$, conditional on $Z_1,\ldots,Z_n$, the CRT draws 
\[X^{(1)}_i \sim Q(\cdot|Z_i),\]
independently for each $i=1,\dots,n$, and independently of the observed $X_i$'s and $Y_i$'s. (In the special case where $X$ is binary, earlier work by \citet{rosenbaum1984conditional} proposed a related test, referred to as a ``conditional permutation test'' but which in fact resamples $X$ by estimating $\PPst{X=1}{Z}$ with a logistic model.)

Under the null hypothesis $H_0$ that $X\independent Y |Z$, we see that
\[\big(X|Y=y,Z=z\big) \eqd \big(X|Z=z\big) \sim Q(\cdot|z),\]
where $\eqd$ denotes equality in distribution. This means that
\[(\X^{(1)},\Y,\Z) \eqd (\X,\Y,\Z)\textnormal{ under $H_0$},\]
where $\X^{(1)}=(X^{(1)}_1,\dots,X^{(1)}_n)$. Any large differences between these two triples---for instance, if $\Y$ is highly correlated with $\X$ but not with $\X^{(1)}$---can therefore be interpreted as evidence against the null hypothesis. In order to construct a test of $H_0$, then, the CRT repeats this process $M$ times, sampling
\[\big(X^{(m)}_i|\X,\Y,\Z \big)\sim Q(\cdot|Z_i),\textnormal{ independently for $i=1,\dots,n$ and $m=1,\dots,M$}\]
to form control vectors $\X^{(1)},\dots,\X^{(M)}$. Under the null hypothesis, the triples $(\X,\Y,\Z)$, $(\X^{(1)},\Y,\Z)$, \dots, $(\X^{(M)},\Y,\Z)$ are all identically distributed; in fact, they are exchangeable. For any statistic $T=T(\X,\Y,\Z)$ that is chosen in advance (or, at least, without looking at $\X$), the random variables
\begin{equation}\label{eqn:CRT_M_copies}T(\X,\Y,\Z),T(\X^{(1)},\Y,\Z),\dots,T(\X^{(M)},\Y,\Z)\end{equation}
are therefore exchangeable as well. We can compute a p-value by ranking the value obtained from the true $\X$ vector against the values obtained from the CRT's copies:
\[p = \frac{1+\sum_{m=1}^M\One{T(\X^{(m)},\Y,\Z)\geq T(\X,\Y,\Z)}}{1+M}.\]
The exchangeability of the random variables in~\eqref{eqn:CRT_M_copies} ensures that this is a valid p-value under the null, i.e., it satisfies $\PP{p\leq \alpha}\leq \alpha$ for all $\alpha\in[0,1]$ if the null hypothesis $H_0$ is true. 

The ``model-X knockoffs'' framework of \citet{candes2018panning} also extends the CRT technique to the high-dimensional variable selection setting, where each of $p$ features is tested in turn for conditional independence with the response $Y$, with the goal of false discovery rate control. In this framework, only a single copy of each feature is created. The robustness of the model-X knockoffs method, with respect to errors in the conditional distributions used to construct the knockoff copies of each feature (analogous to the $\X^{(m)}$'s above), was studied by \citet{barber2018robust}.

\subsection{Other tests of conditional independence}\label{sec:lit_review}
Before introducing our new work, we give a brief overview of some additional conditional independence testing methods proposed in the literature. Many methods assume some parametric model for the response $Y$, such as a linear model, $Y = \alpha X+\beta^\top Z + \textnormal{(noise)}$, in which case the problem reduces to testing whether $\alpha=0$. This can be tested by, for instance, computing an estimate $\widehat{\beta}$ and testing whether the residual $Y-\widehat{\beta}^\top Z$ is correlated with $X$. \citet{belloni2014inference} propose a variant on this approach, which assumes approximate linear models for both $Y$ and $X$. Their method regresses both $X$ and $Y$ on $Z$, then tests for correlation between the two resulting residual vectors; this ``double regression'' offers superior performance by removing much of the bias coming from errors in estimating the effect of $Z$. \citet{shah2018hardness} consider a more general double regression framework, assuming that the conditional means $\EEst{X}{Z=z}$ and $\EEst{Y}{Z=z}$ can be estimated at a sufficiently fast rate.

Away from the regression setting, many proposed methods are based on using kernel representations or low-dimensional projections of the data. Tests based on embedding the data into reproducing kernel Hilbert spaces are studied in, for example, \citet{Fukumizu2008,Zhang2011} and \citet{strobl2017approximate}. Other works use permutations of the data, including \citet{doran2014permutation} and \citet{sen2017model}, where the methods have the flavor of binning $Z$ and then permuting within groups. \citet{bergsma2004testing}, \citet{song2009testing} and \citet{veraverbeke2011estimation} study copula methods
for testing conditional independence. There is also a large literature on extending measures of marginal independence to the conditional setting, including partial distance covariance \citep{szekely2014partial}; conditional mutual information \citep{Runge2018}; characteristic functions \citep{su2007consistent}; Hellinger distances \citep{su2008nonparametric}; and smoothed empirical likelihoods \citep{su2014testing}.

A related problem is that of testing the null hypothesis that a certain treatment has no effect in a randomized experiment. In the treatment effects literature it is common to calculate p-values by comparing a test statistic to null statistics based on randomly reassigning treatments in the data. However, in some situations, uniformly random reassignment is inappropriate, and does not result in valid p-values, due to the presence of some underlying structure; see \citet{athey2018exact} for network dependence and \citet{hennessy2016conditional} for covariate imbalance. In such cases it is sometimes possible to develop non-uniform randomization schemes that result in valid p-values, as with the CPT and the CRT.

\section{The conditional permutation test (CPT)}\label{sec:CPT}

Recall that the conditional randomization test (CRT)~\cite{candes2018panning} creates copies $\X^{(m)}$ of the vector $\X$ sampled under the null hypothesis that $X\independent Y|Z$, by drawing
\begin{equation}\label{eqn:CRT_sample}\X^{(m)}|\X,\Y,\Z\sim Q^n(\cdot| \Z),\textnormal{ independently for $m=1,\dots,M$,}\end{equation}
where we define $Q^n(\cdot| \Z) := Q(\cdot| Z_1)\times \dots \times Q(\cdot| Z_n)$. This mechanism creates copies $\X^{(1)},\dots,\X^{(M)}$ that are exchangeable with the original vector $\X$ under the null hypothesis that $X\independent Y |Z$.

Our proposed method, the conditional permutation test (CPT), is a variant on the CRT, with $\X^{(1)},\ldots,\X^{(M)}$ drawn as in~\eqref{eqn:CRT_sample} but under the constraint that each $\X^{(m)}$ must be a permutation of the original vector $\X$.  Once we have drawn $\X^{(1)},\dots,\X^{(M)}$, they will then be used exactly as for the CRT---given some predefined statistic $T = T(\X,\Y,\Z)$, our p-value is given by
\begin{equation}\label{eqn:pval_CPT}
p =  \frac{1 + \sum_{m=1}^M \One{T(\X^{(m)},\Y,\Z)\geq T(\X,\Y,\Z)}}{1+M}.
 \end{equation}
All that remains, then, is to specify how these permuted copies $\X^{(m)}$ will be drawn.

In order to draw the $\X^{(m)}$'s, we first need to define some notation. Let $\mathcal{S}_n$ denote the set of permutations on the indices $\{1,\dots,n\}$.
Given any vector $\x=(x_1,\dots,x_n)$ and any permutation $\pi\in\mathcal{S}_n$, define
$\x_\pi = (x_{\pi(1)},\dots,x_{\pi(n)})$,
i.e., the vector $\x$ with its entries reordered according to the permutation $\pi$. 

The CPT copies $\X^{(1)},\dots,\X^{(M)}$ are then drawn as follows:
after observing $\X,\Y,\Z$, we draw $M$ permutations $\pi^{(1)},\dots,\pi^{(M)}$ according to the conditional distribution of $X|Z$, and then apply these permutations to $\X$.
Specifically, let
\begin{equation}\label{eqn:CPT_sample}\X^{(m)} = \X_{\pi^{(m)}}\text{ \ where \ }\PPst{\pi^{(m)} = \pi}{\X,\Y,\Z} = \frac{q^n(\X_\pi|\Z)}{\sum_{\pi'\in\mathcal{S}_n} q^n(\X_{\pi'}|\Z)}.\end{equation}
Here we let $q(\cdot|z)$ be the density of the distribution $Q(\cdot|z)$ (i.e., $q(\cdot|z)$ is the conditional density of $X$ given $Z=z$), with respect to some base measure $\nu$ on $\Xcal$ that does not depend on $z$. We write $q^n(\cdot | \Z):=q(\cdot|Z_1)\cdot\dots\cdot q(\cdot|Z_n)$ to denote the product density. Note that we are not assuming a continuous distribution necessarily; the base measure may be discrete, allowing $X$ to be discrete as well.

Why is this the right distribution for drawing the permuted copies $\X^{(1)},\dots,\X^{(M)}$? To understand this,
it is helpful to consider a different formulation of the permutation scheme.
Let $\X_{()} = (X_{(1)},\dots,X_{(n)})$ be the order statistics of the list of values $\X=(X_1,\dots,X_n)$.\footnote{In
the setting where $\Xcal=\R$, we can of course use the usual ordering on $\R$. 
In the general case we can simply take an arbitrary total ordering on $\Xcal$; 
 the choice of ordering is irrelevant as its only role is to allow us to observe the set of values of $\X$ without knowing which one corresponds
to which data point.} Define also $\X_{(\pi)} = (X_{(\pi(1))},\dots,X_{(\pi(n))})$
for each $\pi\in\mathcal{S}_n$, and
let $\Pi\in\mathcal{S}_n$ be the permutation
given by the ranks of the true observed vector $\X$, so that $\X = \X_{(\Pi)}$. In other words,
$\X_{()}$ gives the order statistics of $\X$, and $\Pi$ reveals the ranks; together these two pieces of information are sufficient 
to reconstruct $\X$.\footnote{If the unlabeled values $X_{(i)}$ are not unique, 
then formally, we define $\Pi$ by choosing it uniformly at random from the set of all permutations that satisfy this condition.}  

Under the null hypothesis that $X\independent Y | Z$, we can verify that the distribution of the true ranks $\Pi$, conditional on $\Y,\Z$ as well as on the order statistics $\X_{()}$, is given
by
\begin{equation}\label{eqn:distrib_Pi}\PPst{\Pi = \pi}{\X_{()},\Y,\Z} =\frac{q^n(\X_{(\pi)}|\Z)}{\sum_{\pi'\in\mathcal{S}_n} q^n(\X_{(\pi')}|\Z)}.\end{equation}
Furthermore, examining the definition~\eqref{eqn:CPT_sample} of the CPT copies $\X^{(1)},\dots,\X^{(M)}$, we
can see that the CPT can equivalently be defined by
\begin{equation}\label{eqn:CPT_sample_alt}\X^{(m)} = \X_{(\Pi^{(m)})}\text{ \ where \ $\Pi^{(m)}|\X_{()},\Y,\Z$ is drawn from~\eqref{eqn:distrib_Pi}}.\end{equation}
In fact, comparing with~\eqref{eqn:CPT_sample}, we see that $\Pi^{(m)} = \Pi \circ \pi^{(m)}$.

The following theorem formalizes the above intuition, and verifies that this procedure yields a valid test of $H_0$.
\begin{theorem}\label{thm:CPT_exact}
Assume that $H_0:X\independent Y|Z$ is true, and that the conditional distribution of $X|Z$ is given by $Q(\cdot|Z)$.
Suppose that $\X^{(1)},\dots,\X^{(M)}$ are drawn i.i.d.~from the CPT sampling scheme given in~\eqref{eqn:CPT_sample}.
Then the $M+1$ triples
\[(\X,\Y,\Z), \ (\X^{(1)},\Y,\Z), \ \dots, \ (\X^{(M)},\Y,\Z)\]
are exchangeable. In particular, this implies that for any statistic $T:\Xcal^n\times\Ycal^n\times\Zcal^n\rightarrow\R$, the p-value defined in~\eqref{eqn:pval_CPT} is valid, satisfying $\PP{p\leq \alpha}\leq \alpha$ for any desired Type~I error rate $\alpha\in[0,1]$ when $H_0$ is true.
\end{theorem}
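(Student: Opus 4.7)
The plan is to prove the theorem via the rank-based characterisation in~\eqref{eqn:CPT_sample_alt}: it suffices to show that, conditional on $(\X_{()},\Y,\Z)$, the $M+1$ permutations $\Pi,\Pi^{(1)},\dots,\Pi^{(M)}$ are i.i.d.\ from the distribution in~\eqref{eqn:distrib_Pi}. Because the map $\pi\mapsto(\X_{(\pi)},\Y,\Z)$ is $(\X_{()},\Y,\Z)$-measurable and applied identically to each permutation, this exchangeability of ranks immediately yields exchangeability of the $M+1$ triples $(\X,\Y,\Z),(\X^{(1)},\Y,\Z),\dots,(\X^{(M)},\Y,\Z)$.

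Step one is to verify~\eqref{eqn:distrib_Pi}. Under $H_0$, the conditional density of $\X$ given $(\Y,\Z)$ reduces to its conditional density given $\Z$, which by assumption is $q^n(\,\cdot\,|\Z)$ with respect to the product base measure $\nu^n$. Since the pair $(\X_{()},\Pi)$ determines $\X$ through $\X=\X_{(\Pi)}$ and, after fixing a measurable tie-breaking rule as in the footnote, the correspondence $\X\leftrightarrow(\X_{()},\Pi)$ is a measurable bijection, a direct change-of-variables calculation gives that the conditional distribution of $\Pi$ given $(\X_{()},\Y,\Z)$ assigns mass proportional to $q^n(\X_{(\pi)}|\Z)$ to each $\pi\in\mathcal{S}_n$, as claimed.

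Step two is to show that the CPT permutations $\Pi^{(m)}:=\Pi\circ\pi^{(m)}$ satisfy the same conditional distribution and are jointly independent of $\Pi$. Substituting $\X=\X_{(\Pi)}$ into~\eqref{eqn:CPT_sample} gives $\X_{\pi^{(m)}}=\X_{(\Pi\circ\pi^{(m)})}$, and reindexing the denominator via $\sigma=\Pi\circ\pi'$ turns $\sum_{\pi'}q^n(\X_{\pi'}|\Z)$ into $\sum_{\sigma}q^n(\X_{(\sigma)}|\Z)$. Hence
\[\PPst{\Pi^{(m)}=\tau}{\X,\Y,\Z} = \frac{q^n(\X_{(\tau)}|\Z)}{\sum_{\sigma\in\mathcal{S}_n}q^n(\X_{(\sigma)}|\Z)},\]
which depends on $\X$ only through $\X_{()}$. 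Since the $\pi^{(m)}$ are conditionally i.i.d.\ given $(\X,\Y,\Z)$ and each marginal law depends only on the coarser $\sigma$-algebra generated by $(\X_{()},\Y,\Z)$, the $\Pi^{(m)}$ are jointly i.i.d.\ from~\eqref{eqn:distrib_Pi} conditional on $(\X_{()},\Y,\Z)$ and conditionally independent of $\Pi$. Combined with step one, this gives $(\Pi,\Pi^{(1)},\dots,\Pi^{(M)})$ as a conditionally i.i.d., and hence exchangeable, sample given $(\X_{()},\Y,\Z)$.

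The exchangeability of the $M+1$ triples follows by applying the map $\pi\mapsto(\X_{(\pi)},\Y,\Z)$ and integrating out the conditioning, and p-value validity is then the standard exchangeability argument: the values $T(\X,\Y,\Z),T(\X^{(1)},\Y,\Z),\dots,T(\X^{(M)},\Y,\Z)$ are exchangeable, so the normalised rank of $T(\X,\Y,\Z)$ from above is stochastically at least uniform on $\{1/(M+1),\dots,1\}$, giving $\PP{p\leq\alpha}\leq\alpha$. The main obstacle is the careful measure-theoretic setup in step one, particularly making the bijection between $\X$ and $(\X_{()},\Pi)$ rigorous in the presence of ties; step two, though it looks elaborate, is essentially a reindexing manipulation that the CPT construction has been designed precisely to make work.
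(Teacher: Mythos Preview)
Your proposal is correct and follows essentially the same approach as the paper's proof: the paper also reduces to showing that $\Pi,\Pi^{(1)},\dots,\Pi^{(M)}$ are conditionally i.i.d.\ from~\eqref{eqn:distrib_Pi} given $(\X_{()},\Y,\Z)$, citing the discussion preceding the theorem for steps one and two and then marginalising. Your version simply spells out in more detail the change-of-variables for~\eqref{eqn:distrib_Pi}, the reindexing $\sigma=\Pi\circ\pi'$ that makes~\eqref{eqn:CPT_sample} equivalent to~\eqref{eqn:CPT_sample_alt}, and the standard rank-based argument for p-value validity, all of which the paper treats as already established or routine.
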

\begin{proof}[Proof of Theorem~\ref{thm:CPT_exact}]
Our work above verified that, under $H_0$, the true data vector $\X$ and the CPT copies $\X^{(1)},\dots,\X^{(M)}$
are permutations of $\X_{()}$ obtained via i.i.d.~draws from~\eqref{eqn:distrib_Pi},
conditional on $\X_{()},\Y,\Z$.
Therefore, after marginalizing over $\X_{()},\Y,\Z$, the $M+1$ triples $(\X,\Y,\Z)$, $(\X^{(1)},\Y,\Z)$, \dots, $(\X^{(M)},\Y,\Z)$ are exchangeable.
\end{proof}

\subsection{Comparing the CPT and CRT}\label{sec:compare_CPT_CRT}
To compare the construction of the copies $\X^{(1)},\dots,\X^{(M)}$
in each of the two methods, for the CPT, the copies $\X^{(1)},\dots,\X^{(M)}$ are 
i.i.d.~draws from the null distribution of $\X$, conditional on $\X_{()},\Y,\Z$.
In comparison, the CRT copies defined in~\eqref{eqn:CRT_sample}  are i.i.d.~draws from the null distribution of $\X$ conditioned
on $\Y,\Z$---but without conditioning on $\X_{()}$.

Each of these two constructions
yields a valid test if the distribution $Q(\cdot|z)$, used to draw the (resampled or permuted) copies $\X^{(m)}$, is correct---that is,
if we know the true conditional distribution of $X|Z$. This result is proved in
Theorem~\ref{thm:CPT_exact} above for the CPT, while the analogous result for the CRT is proved in \citet[Lemma 4.1]{candes2018panning}. 
However, if the null hypothesis is {\em not} true, which method might be more sensitive and better able to detect
a non-null? Furthermore, what might occur for these two methods
if $Q(\cdot|z)$ is not exactly correct?
We next explore the difference between the two methods in greater depth to begin to address these questions.

\paragraph{Use of marginal distribution of $\X$} In terms of how the tests are run, the difference between the CPT and CRT can be described as follows: while both tests use the (true or estimated) conditional distribution $Q(\cdot|Z)$, the CPT additionally uses the marginal distribution of the observed data vector $\X$, by observing its unlabeled values $\X_{()}$. Intuitively, using this additional information can in some cases make the copies $\X^{(m)}$ more similar to the original $\X$, than for the CRT. Therefore, the CPT may be somewhat less likely to reject $H_0$, which could lead to lower Type~I error if $H_0$ is true, or reduced power to detect when $H_0$ is false. In Section~\ref{sec:robust}, we will develop theory to examine the two tests' robustness to errors in estimating the conditional distribution $Q(\cdot|Z)$, and we will compare the tests in terms of both Type~I error and power in experiments in Section~\ref{sec:empirical}. 

\paragraph{Invariance to base measure} Since the CPT works only over permutations of the same set of $X$ values, it follows that it is invariant to changes in the base measure on $\Xcal$. To make this concrete, suppose that $q_1(\cdot|z)$ is another conditional density, with the property that there exist functions $h(\cdot),c(\cdot)$ such that $q_1(x|z) = q(x|z)h(x)c(z)$ for all $x\in\Xcal$ and all $z\in\Zcal$. (Here we can think of $h(x)$ as changing the base measure on $\Xcal$, while $c(z)$ adjusts the normalizing constants as needed.)

If this is the case, then running the CPT with $q_1$ in place of $q$ will have no effect on the outcome---this is because we can calculate
\[q_1^n(\X_{\pi}|\Z) = \prod_{i=1}^n q(X_{\pi(i)}|Z_i)h(X_{\pi(i)})c(Z_i) =  q^n(\X_{\pi}|\Z) \cdot \prod_{i=1}^n h(X_i)c(Z_i).\]
The first term, $q^n(\X_{\pi}|\Z)$, is the same as for the CPT run with conditional density $q$, while the second term, $\prod_{i=1}^n h(X_i)c(Z_i)$, does not depend on the permutation $\pi$ and therefore does not affect the resulting distribution of the sampled permutations. In other words, the CPT sampling distribution given in~\eqref{eqn:CPT_sample}
is unchanged if we replace $q$ with $q_1$.

This means that the CPT is a valid test, i.e., the result of Theorem~\ref{thm:CPT_exact} holds, even if the conditional density $q(\cdot|z)$ is correct only up to a change in base measure---that is, Theorem~\ref{thm:CPT_exact} holds whenever the conditional distribution $Q(\cdot|Z)$ has a density of the form $q(x|z)h(x)c(z)$, for some functions $h(\cdot),c(\cdot)$. Indeed, in some settings, it may be substantially simpler to estimate the conditional density only up to base measure---for instance, we can consider a semiparametric model with a conditional density of the form $\exp\{x\cdot z^\top \theta - f(x) - g(z)\}$, in which case the CPT would only need to estimate the parametric component $\theta$. In contrast, running the CRT requires being able to sample from the conditional distribution $Q(\cdot|Z)$, so we would need to approximate the full conditional density.
\section{Sampling algorithms for the CPT}\label{sec:MH}
In order to run the CPT, we need to be able to sample permutations $\Pi^{(1)},\dots,\Pi^{(M)}$
 from the distribution given in~\eqref{eqn:CPT_sample}. We now turn to the problem of generating such samples efficiently. 

One simple approach would be to run a Metropolis--Hastings algorithm with a proposal distribution that, from a current state $\pi$, draws its proposed permutation $\pi'$ uniformly at random. For even a moderate $n$, however, the acceptance odds ratio
\begin{equation}\label{eqn:MH_unif} \frac{q^n(\X_{\pi'}|\Z)}{q^n(\X_\pi|\Z)} = \frac{\prod_{i=1}^n q(X_{\pi'(i)}|Z_i)}{\prod_{i=1}^n q(X_{\pi(i)}|Z_i)}\end{equation}
 will be extremely low for nearly all permutations $\pi'$ (unless, of course, the dependence of $X$ on $Z$ is very weak). In other words, a uniformly drawn permutation $\pi'$ is not likely to lead to a plausible vector of $X$ values, leading to slow mixing times.

As a second attempt, we can consider a different proposal distribution: from the current state $\pi$, we propose the permutation $\pi' = \pi\circ  \sigma_{ij}$, where $\sigma_{ij}$ is the permutation that swaps indices $i$ and $j$, which are drawn at random. The acceptance odds ratio~\eqref{eqn:MH_unif} now simplifies to
\begin{equation}\label{eqn:MH_pair} \frac{q(X_{\pi(j)}|Z_i)\cdot q(X_{\pi(i)}|Z_j)}{q(X_{\pi(i)}|Z_i)\cdot q(X_{\pi(j)}|Z_j)}.\end{equation}
The probability of accepting a swap will now be reasonably high; however, each step can only alter two of the $n$ indices, again leading to slow mixing times.

\subsection{A parallelized pairwise sampler}

To address these issues, we propose a parallelized version of this pairwise algorithm. At each step, we first draw $\lfloor n/2\rfloor$ disjoint pairs of indices from $\{1,\ldots,n\}$. Next, independently and in parallel for each pair, we decide whether or not to swap this pair $(i,j)$, according to the odds ratio~\eqref{eqn:MH_pair}. This sampler is defined formally in Algorithm~\ref{alg:MC_parallel}. For ease of our theoretical analysis, we will work with the order statistics $\X_{()}$, rather than the original ordered vector $\X$, 
in our sampler; this difference is only in the notation, i.e., the algorithm can equivalently be implemented with $\X$ in place of $\X_{()}$.
\begin{algorithm}[t]
\caption{Parallelized pairwise sampler for the CPT}
\label{alg:MC_parallel}
\begin{algorithmic}
\STATE \textbf{Input:} Initial permutation $\Pi^{[0]}$, integer $S\geq 1$.\smallskip
\FOR{$s=1,2,\dots,S$}
\STATE Sample uniformly without replacement from $\{1,\ldots,n\}$ to obtain disjoint pairs 
\[(i_{s,1},j_{s,1}),\dots,(i_{s,\lfloor n/2\rfloor},j_{s,\lfloor n/2\rfloor}).\]\vspace{-.2in}
\STATE Draw independent Bernoulli variables $B_{s,1},\dots,B_{s,\lfloor n/2\rfloor}$ with odds ratios
\begin{equation}\label{eqn:MC_oddsratio}
\frac{\PP{B_{s,k}=1} }{\PP{B_{s,k}=0}}= \frac{q(X_{(\Pi^{[s-1]}(j_{s,k}))}|Z_{i_{s,k}})\cdot q(X_{(\Pi^{[s-1]}(i_{s,k}))}|Z_{j_{s,k}})}{q(X_{(\Pi^{[s-1]}(i_{s,k}))}|Z_{i_{s,k}})\cdot q(X_{(\Pi^{[s-1]}(j_{s,k}))}|Z_{j_{s,k}})}.\end{equation}
\STATE Define $\Pi^{[s]}$ by swapping $\Pi^{[s-1]}(i_{s,k})$ and $\Pi^{[s-1]}(j_{s,k})$ for each $k$ with $B_{s,k}=1$.
\ENDFOR
\end{algorithmic}
\end{algorithm}

The next theorem verifies that the resulting Markov chain yields the desired stationary distribution.
 (The proof of this theorem, and all remaining proofs, are given in Appendix~\ref{app:proofs}.)
\begin{theorem}\label{thm:MC} For every initial permutation $\Pi^{[0]}$, 
the distribution~\eqref{eqn:distrib_Pi}  of the 
permutation $\Pi$ conditional on $\X_{()},\Y,\Z$
is a stationary distribution of the Markov chain defined in Algorithm~\ref{alg:MC_parallel}. If additionally $q(x|z)>0$ for all $x\in\Xcal$ and all $z\in\Zcal$,
then it is the unique stationary distribution.
\end{theorem}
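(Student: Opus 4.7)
My plan is to prove stationarity via detailed balance and then handle uniqueness via a standard irreducibility argument on the finite symmetric group.

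\textbf{Step 1 (reduction to a fixed pairing).} Write the Markov kernel as a mixture over the random pairing. Conditional on a fixed pairing $\mathcal{P}=\{\{i_1,j_1\},\dots,\{i_K,j_K\}\}$ of $\{1,\dots,n\}$ (with $K=\lfloor n/2\rfloor$), the state after one step is $\pi \circ \sigma_S$, where $\sigma_S=\prod_{k\in S}\sigma_{i_kj_k}$ and $S\subseteq\{1,\dots,K\}$ is the (random) subset of pairs we decide to swap; only permutations of this form have positive one-step transition probability through $\mathcal{P}$. Since $\mathcal{P}$ is drawn uniformly over pairings and, crucially, \emph{without reference to the current state $\pi$}, it suffices to verify detailed balance with respect to $\mu(\pi)\propto q^n(\X_{(\pi)}|\Z)$ separately for each fixed $\mathcal{P}$ (the marginalization over $\mathcal{P}$ then preserves it).

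\textbf{Step 2 (factorization of the odds ratio).} For any $S\subseteq\{1,\dots,K\}$, because the pairs in $\mathcal{P}$ are disjoint, the ratio $q^n(\X_{(\pi\circ\sigma_S)}|\Z)/q^n(\X_{(\pi)}|\Z)$ factorizes across the pairs as $\prod_{k\in S} r_k(\pi)$, where $r_k(\pi)$ is exactly the odds ratio in~\eqref{eqn:MC_oddsratio} at pair $k$. Likewise the acceptance/rejection probability at each pair, given $\pi$, depends only on the entries at that pair, so the joint probability of executing the subset $S$ of swaps factorizes as $\prod_{k\in S}\frac{r_k(\pi)}{1+r_k(\pi)}\prod_{k\notin S}\frac{1}{1+r_k(\pi)}$.

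\textbf{Step 3 (per-pair detailed balance and aggregation).} Noting that $r_k(\pi\circ\sigma_S)=r_k(\pi)$ for $k\notin S$ and $r_k(\pi\circ\sigma_S)=r_k(\pi)^{-1}$ for $k\in S$, I will compute both sides of
\[
\mu(\pi)\,\prod_{k\in S}\frac{r_k(\pi)}{1+r_k(\pi)}\prod_{k\notin S}\frac{1}{1+r_k(\pi)}
\;=\;
\mu(\pi\circ\sigma_S)\,\prod_{k\in S}\frac{1}{1+r_k(\pi)}\prod_{k\notin S}\frac{1}{1+r_k(\pi)},
\]
using $\mu(\pi\circ\sigma_S)/\mu(\pi)=\prod_{k\in S}r_k(\pi)$; both sides simplify to $\mu(\pi)\prod_{k\in S}r_k(\pi)\cdot\prod_k\frac{1}{1+r_k(\pi)}$, so equality holds. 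Integrating over the uniform distribution on pairings yields detailed balance for the whole Markov kernel, and hence $\mu$ is stationary.

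\textbf{Step 4 (uniqueness under positivity).} Under $q(x|z)>0$ everywhere, $r_k(\pi)\in(0,\infty)$ for every pair and state, so for any pairing $\mathcal{P}$ each subset $S$ has strictly positive probability; in particular, with positive probability the step performs a single transposition (take $|S|=1$ and place the two indices in the same pair). Since any permutation can be written as a finite product of transpositions, the chain is irreducible on the finite set $\mathcal{S}_n$, so it has a unique stationary distribution, which must be $\mu$.

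The only subtle step is Step 3: the main thing to get right is the bookkeeping of the odds ratios under $\sigma_S$, and observing that the factorization in Step 2 is what makes parallel swaps compatible with reversibility despite not being a single Metropolis--Hastings proposal. Steps 1 and 4 are essentially routine once the factorization is in place.
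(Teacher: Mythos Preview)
Your proof is correct and follows essentially the same approach as the paper's: both condition on the random pairing, verify detailed balance for the conditional kernel, and then average over pairings. The paper packages the conditional kernel more abstractly---observing that, given a partition $p$, the transition from $\pi$ to any $\pi'\sim_p\pi$ has probability proportional to $\mu(\pi')$ on the equivalence class, which makes detailed balance immediate---while you carry out the same computation explicitly via the product of per-pair Bernoulli factors; your irreducibility argument (transpositions generate $\mathcal{S}_n$) is also what the paper has in mind when it calls this step ``trivial.''
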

This result justifies the thought that, if Algorithm~\ref{alg:MC_parallel} is run for a sufficient number of steps $S$, then the resulting 
copy $\X_{(\Pi^{[S]})}$ acts as an appropriate control for $\X$ in testing conditional independence.
In fact, though, we can make a much stronger statement---since
the original permutation $\Pi$ also follows the 
distribution~\eqref{eqn:distrib_Pi} conditional on $\X_{()},\Y,\Z$ under the null, this means that
by initializing Algorithm~\ref{alg:MC_parallel} at $\Pi^{[0]}=\Pi$ (that is, at the original data vector $\X$),
we are initializing with a draw from the stationary distribution. Therefore $\X^{[S]} = \X_{(\Pi^{[S]})} $ is a draw from the target
distribution at any $S$, and is a valid control for $\X$ even if the number of steps $S$ is small.
Of course, if $S$ is too small, then the control copy will be too similar to the original data vector $\X$, and our power
to reject the null will be low; we explore this empirically in Section~\ref{sec:empirical}, and will see that the sampler
mixes well at even a moderate $S$ (e.g., in our experiments, we used $S=50$).

In practice, we want to draw $M$ copies, $\X^{(m)}$ for $m=1,\dots,M$, and we need to ensure that the original data $\X$ and each of the $M$ permutations $\X^{(m)}$ are all exchangeable with each other. If we sample the permuted vectors $\X^{(1)},\dots,\X^{(M)}$ sequentially, by running Algorithm~\ref{alg:MC_parallel} for $S\cdot M$ steps and extracting one copy $\X^{(m)}$ after each round of $S$ steps, then we would not achieve exchangeability, since there would be some correlation between adjacent copies in this sequence. (Of course, in practice, if the number of steps $S$ is chosen to be large, then the violation of exchangeability would be very mild.)

Instead, we can construct an exchangeable sampling mechanism with the following algorithm:
\begin{algorithm}[H]
\caption{Exchangeable sampler for multiple draws from the CPT}
\label{alg:MC_exchangeable}
\begin{algorithmic}
\STATE \textbf{Input:} Initial permutation $\Pi_{\mathrm{init}}$ and integer $S\geq 1$.\smallskip
\STATE Define $\Pi_{\sharp}$ by running Algorithm~\ref{alg:MC_parallel} initialized at $\Pi^{[0]} = \Pi_{\mathrm{init}}$ for $S$ steps.
\FOR{$m=1,\dots,M$ (independently for each $m$)}
\STATE Define $\Pi^{(m)}$ by running Algorithm~\ref{alg:MC_parallel} initialized at $\Pi^{[0]} = \Pi_{\sharp}$ for $S$ steps.
\ENDFOR
\end{algorithmic}
\end{algorithm}

Algorithm~\ref{alg:MC_exchangeable} provides an exchangeable sampling mechanism, since the permutation $\Pi_{\sharp}$ is at the ``center'', lying $S$ steps away from each of the permutations $\Pi,\Pi^{(1)},\dots,\Pi^{(M)}$. The following result verifies exchangeability:
\begin{theorem}\label{thm:MC_exchangeable} Let $\X_{()}$ and $\Pi$ be the order statistics and ranks of $\X$, as defined previously, so that $\X = \X_{(\Pi)}$. Let $\Pi^{(1)},\dots,\Pi^{(M)}$ be the output of Algorithm~\ref{alg:MC_exchangeable}, when initialized at $\Pi_{\mathrm{init}}=\Pi$, 
and let $\X^{(m)} = \X_{(\Pi^{(m)})}$ for each $m=1,\dots,M$. Assume that the null hypothesis that $X\independent Y|Z$ holds, and the conditional distribution of $X|Z$ is given by $Q(\cdot|Z)$, so that the distribution of $\Pi$ conditional on $\X_{()},\Y,\Z$ is given by~\eqref{eqn:distrib_Pi}.  Then the triples $(\X,\Y,\Z),(\X^{(1)},\Y,\Z),\dots,(\X^{(M)},\Y,\Z)$ are exchangeable.
\end{theorem}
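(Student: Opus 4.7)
The plan rests on two facts: (i) the Markov kernel $K$ of Algorithm~\ref{alg:MC_parallel} is reversible (not merely stationary) with respect to the target distribution $\mu(\pi)\propto q^n(\X_{(\pi)}|\Z)$ on $\mathcal{S}_n$ from~\eqref{eqn:distrib_Pi}, and (ii) under $H_0$, the true ranks $\Pi$ are themselves a draw from $\mu$ conditional on $(\X_{()},\Y,\Z)$, again by~\eqref{eqn:distrib_Pi}. Throughout I condition on $(\X_{()},\Y,\Z)$.

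First I would upgrade Theorem~\ref{thm:MC} from stationarity to reversibility of $K$. The pair-partition $P$ in each step of Algorithm~\ref{alg:MC_parallel} is drawn uniformly and independently of the current state, so its distribution is symmetric in the source and target permutations. Conditional on $P$, each pair $(i,j)\in P$ is swapped independently with probability $\alpha_{ij}(\pi) = r_{ij}(\pi)/\{1+r_{ij}(\pi)\}$, where $r_{ij}(\pi)$ is the odds ratio in~\eqref{eqn:MC_oddsratio}. A direct calculation using $\mu(\pi\sigma_{ij})/\mu(\pi)=r_{ij}(\pi)$, where $\sigma_{ij}$ is the transposition of $i$ and $j$, gives pairwise detailed balance $\mu(\pi)\alpha_{ij}(\pi)=\mu(\pi\sigma_{ij})\alpha_{ij}(\pi\sigma_{ij})$. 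Because the pairs in $P$ are disjoint, swaps on different pairs commute and $\alpha_{ij}$ depends only on the values assigned to indices $i,j$. Iterating pairwise detailed balance then yields $\mu(\pi)\prod_{(i,j)\in B}\alpha_{ij}(\pi)=\mu(\pi\sigma_B)\prod_{(i,j)\in B}\alpha_{ij}(\pi\sigma_B)$ for any subset $B\subseteq P$ of accepted swaps, while $\alpha_{ij}(\pi)=\alpha_{ij}(\pi\sigma_B)$ for $(i,j)\in P\setminus B$. Averaging over $P$ produces $\mu(\pi)K(\pi,\pi')=\mu(\pi')K(\pi',\pi)$.

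Reversibility of $K$ transfers to the $S$-step kernel $K^S$, so when $\Pi\sim\mu$ the pair $(\Pi,\Pi_{\sharp})$ and its reverse $(\Pi_{\sharp},\Pi)$ share the same joint distribution; equivalently, the conditional law of $\Pi$ given $\Pi_{\sharp}$ is $K^S(\Pi_{\sharp},\cdot)$. By the construction in Algorithm~\ref{alg:MC_exchangeable}, each $\Pi^{(m)}$ is exactly an independent draw from $K^S(\Pi_{\sharp},\cdot)$. Therefore, conditional on $(\Pi_{\sharp},\X_{()},\Y,\Z)$, the $M+1$ permutations $\Pi,\Pi^{(1)},\dots,\Pi^{(M)}$ are i.i.d.\ and thus exchangeable, and marginalising over $\Pi_{\sharp}$ preserves this exchangeability. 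Since $\X=\X_{(\Pi)}$ and $\X^{(m)}=\X_{(\Pi^{(m)})}$ are the same deterministic function of the corresponding permutation and of $\X_{()}$, a final marginalisation over $\X_{()}$ delivers the claimed exchangeability of the triples $(\X,\Y,\Z),(\X^{(1)},\Y,\Z),\dots,(\X^{(M)},\Y,\Z)$.

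The hard part will be the reversibility upgrade: one must check that the Barker-type acceptance yields single-pair detailed balance, and then exploit disjointness of the sampled partition together with the symmetry of the partition-sampling step to lift pairwise balance to balance for the full parallel kernel. Once reversibility is in hand, the ``centred double-run from $\Pi_{\sharp}$'' design of Algorithm~\ref{alg:MC_exchangeable} converts $S$ forward steps into an exchangeable family essentially automatically.
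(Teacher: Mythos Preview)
Your proposal is correct and follows essentially the same route as the paper: use reversibility of the kernel $K$ together with $\Pi\sim\mu$ to conclude that $\Pi,\Pi^{(1)},\dots,\Pi^{(M)}$ are i.i.d.\ draws from $K^S(\Pi_{\sharp},\cdot)$ conditional on $(\Pi_{\sharp},\X_{()},\Y,\Z)$, then marginalise. The only difference is that you treat the reversibility of $K$ as a separate ``upgrade'' to be proved, whereas the paper's proof of Theorem~\ref{thm:MC} already establishes it by verifying the detailed balance equations directly (even though the theorem \emph{statement} only mentions stationarity); so the ``hard part'' you anticipate is in fact already available.
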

This result ensures that the results of Theorem~\ref{thm:CPT_exact} hold when the permuted vectors $\X^{(1)},\dots,\X^{(M)}$ are obtained via the exchangeable sampler.

\section{Robustness of the CPT and CRT}\label{sec:robust}
We next consider whether the CPT and CRT, based on resampling $X$ from a known or estimated conditional distribution given $Z$, are robust to slight errors in this distribution. Suppose that the conditional distribution $Q(\cdot|Z)$ that we use for sampling when running the CPT or CRT is only an approximation to the true conditional, denoted by $Q_\star(\cdot| Z)$. In this section we provide bounds on the excess Type~I error of the CPT and CRT as a function of the difference between the true conditional $Q_\star$ and its approximation $Q$. Throughout, we will assume that the statistic $T:\Xcal^n\times\Ycal^n\times\Zcal^n\rightarrow\R$ used in the test, as well as the approximation $Q$ to the conditional distribution, are chosen independently of $\X,\Y$. For instance, in many applications, we may have access to unlabeled data, i.e., draws of $(X,Z)$ without $Y$, which we can use to construct an estimate $Q$.

Our first result demonstrates that, conditional on $\Y,\Z$, the excess Type~I error of both the CPT and the CRT is bounded by the total variation distance between $Q_\star$ and $Q$. (For any two distributions $Q_1,Q_2$ defined on the same probability space, the total variation distance is defined as $\tv(Q_1,Q_2) = \sup_A |Q_1(A) - Q_2(A)|$, where the supremum is taken over all measurable sets.)
\begin{theorem}\label{thm:robust}
Assume that $H_0:X\independent Y|Z$ is true, and that the conditional distribution of $X|Z$ is given by $Q_\star(\cdot|Z)$. For a fixed integer $M\geq 1$, let $\X^{(1)},\dots,\X^{(M)}$ be copies of $\X$ generated either from the CRT~\eqref{eqn:CRT_sample}, from the CPT~\eqref{eqn:CPT_sample},
or from the exchangeable sampler  for the CPT (Algorithm~\ref{alg:MC_exchangeable}) with any fixed parameter $S\geq1$, 
using an estimate $Q$ of the true conditional distribution $Q_\star$.

Then, for any desired Type~I error rate $\alpha\in[0,1]$, 
\[\PPst{p\leq \alpha}{\Y,\Z} \leq \alpha + \tv\big(Q_\star^n(\cdot|\Z),Q^n(\cdot|\Z)\big),\]
where $p$ is the p-value computed in~\eqref{eqn:pval_CPT}, and the probability is taken with respect to the distribution of $\X,\X^{(1)},\dots,\X^{(M)}$ conditional on $\Y,\Z$.
\end{theorem}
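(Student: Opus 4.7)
The approach is a coupling argument via the total variation distance, transferring validity from a thought-experiment ``oracle'' in which the data is drawn from the distribution that the test \emph{believes} to be true.

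\textbf{Step 1 (representation).} First I would observe that for each of the three sampling schemes, the copies $\X^{(1)},\ldots,\X^{(M)}$, and hence the p-value $p$, can be written as a deterministic function $p = p(\X,\Y,\Z,U)$ of the data together with an external source of randomness $U$ (the independent uniforms driving the sampled permutations in the CRT, or the pair selections and Bernoulli decisions in Algorithms~\ref{alg:MC_parallel} and~\ref{alg:MC_exchangeable}) that is independent of $(\X,\Y,\Z)$. Crucially, the function $p$ depends on the estimate $Q$ (through the density $q$) and on the chosen statistic $T$, but not on the unknown truth $Q_\star$.

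\textbf{Step 2 (oracle validity).} On an enlarged probability space, I would introduce an auxiliary vector $\tilde{\X}$ whose conditional distribution given $\Y,\Z$ is $Q^n(\cdot|\Z)$, independent of $(\X,Y,\Z,U)$. Viewing $(\tilde{\X},\Y,\Z)$ as a data set from a model in which the true conditional distribution of $X\mid Z$ really is $Q$ and the null holds, we can apply \citet[Lemma 4.1]{candes2018panning} (for the CRT), Theorem~\ref{thm:CPT_exact} (for the exact CPT), or Theorem~\ref{thm:MC_exchangeable} (for the exchangeable sampler). Each guarantees exchangeability of $(\tilde{\X},\Y,\Z),(\tilde{\X}^{(1)},\Y,\Z),\ldots,(\tilde{\X}^{(M)},\Y,\Z)$, where the $\tilde{\X}^{(m)}$ are produced from $\tilde{\X}$ by the same deterministic function of $U$ used to produce the $\X^{(m)}$ from $\X$. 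This gives the oracle bound
\[ \PPst{p(\tilde{\X},\Y,\Z,U)\le\alpha}{\Y,\Z} \le \alpha. \]

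\textbf{Step 3 (TV transfer).} Conditional on $(\Y,\Z)$, the law of $\X$ is $Q_\star^n(\cdot|\Z)$ (by the null) while the law of $\tilde{\X}$ is $Q^n(\cdot|\Z)$, and $U$ is independent of both. Applying the data-processing inequality for total variation to the binary-valued map $\x\mapsto \one{p(\x,\Y,\Z,U)\le \alpha}$, then averaging over $U$, yields
\[ \bigl|\PPst{p(\X,\Y,\Z,U)\le \alpha}{\Y,\Z} - \PPst{p(\tilde{\X},\Y,\Z,U)\le \alpha}{\Y,\Z}\bigr| \le \tv\bigl(Q_\star^n(\cdot|\Z),Q^n(\cdot|\Z)\bigr). \]
Combining this inequality with Step~2 gives the claim.

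\textbf{Main obstacle.} The only real subtlety is Step~1: making precise the claim that one common randomness $U$, independent of $(\X,\Y,\Z)$, drives the sampler applied to either $\X$ or $\tilde{\X}$. For the CRT this is immediate since the $\X^{(m)}$ are drawn from $Q^n(\cdot|\Z)$ independently of $\X$. For the CPT and its MCMC implementations, the initial state of the chain depends on $\X$ (through the ranks $\Pi$), and the swap odds ratios in~\eqref{eqn:MC_oddsratio} depend on $\X$ through the current permutation; one has to express each algorithm as a deterministic mapping from $(\X,\Z,U)$ to $(\X^{(1)},\ldots,\X^{(M)})$ by, e.g., pre-drawing $U$ as a long sequence of independent uniforms used for pair selections and for acceptance decisions. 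This is essentially bookkeeping, but it is what allows the same function $p$ to be evaluated on both $\X$ and $\tilde{\X}$ so that the TV inequality can be invoked.
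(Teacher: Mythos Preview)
Your proposal is correct and follows essentially the same coupling argument as the paper: introduce an auxiliary $\tilde{\X}\sim Q^n(\cdot\mid\Z)$ (the paper's $\check{\X}$), establish oracle validity for it, and transfer to the true $\X$ via total variation. The only presentational difference is that you make the coupling explicit through a shared external randomness $U$ and the data-processing inequality, whereas the paper packages the same step as the fact that $\tv\bigl((U,V),(U',V')\bigr)=\tv(U,U')$ whenever $(V\mid U=u)\eqd(V'\mid U'=u)$ for all $u$; your functional representation is precisely a concrete way to verify this matching-conditionals condition, and the ``bookkeeping'' obstacle you flag is exactly what the paper handles by checking that the CPT (and its MCMC variant) generate the $\check{\X}^{(m)}$'s from $\check{\X}$ via the same mechanism as the $\X^{(m)}$'s from $\X$.
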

Of course, we can also bound the Type I error rate unconditionally, with
\[\PP{p\leq \alpha} \leq \alpha + \EE{\tv\big(Q_\star^n(\cdot|\Z),Q^n(\cdot|\Z)\big)},\]
which we obtain from the result above by marginalizing over $\Y,\Z$. 

This result ensures that, if $Q$ is a good approximation to $Q_\star$, then both the CPT and CRT will have at most a mild increase in their Type~I error. Of course, Theorem~\ref{thm:robust} is a worst-case result, proved with respect to an arbitrary statistic $T$ which may be chosen adversarially so as to be maximally sensitive to errors in estimating the true conditional distribution $Q_\star$. In practice, we might expect that the simple statistics $T$ that we would most often use, such as correlation between $\X$ and $\Y$, could be more robust to errors than the theorem suggests.

While Theorem~\ref{thm:robust} provides an upper bound on the Type~I error for both the CPT and the CRT, we do not yet have a comparison between the two. The following theorem proves that, for the case of the CRT, the upper bound is in fact tight when the number of copies $\X^{(1)},\dots,\X^{(M)}$ is large:
\begin{theorem}\label{thm:robust_lowerbd}
Under the setting and assumptions of Theorem~\ref{thm:robust}, there exists a statistic $T:\Xcal^n\times\Ycal^n\times\Zcal^n\rightarrow\R$ such that, for the CRT,\footnote{To be more precise with the constant, we can replace $0.5(1+o(1))$ with $2.5$ for any $M\geq 2$.}
\[\sup_{\alpha\in[0,1]} \bigg(\PPst{p\leq \alpha}{\Y,\Z} - \alpha \bigg)\geq \tv\big(Q_\star^n(\cdot|\Z),Q^n(\cdot|\Z)\big) - 0.5(1+o(1))\sqrt{\frac{\log(M)}{M}}\]
as $M \rightarrow \infty$.
\end{theorem}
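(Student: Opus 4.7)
My plan is to construct an adversarial binary statistic that reads off the Neyman--Pearson set for distinguishing $Q_\star^n(\cdot|\Z)$ from $Q^n(\cdot|\Z)$. First I would define the (measurable) set $A(\Z) \subseteq \Xcal^n$ by the Neyman--Pearson choice $A(\Z) = \{q_\star^n(\cdot | \Z) \ge q^n(\cdot | \Z)\}$, which achieves
\[
Q_\star^n\bigl(A(\Z) \, \big| \, \Z\bigr) - Q^n\bigl(A(\Z) \, \big| \, \Z\bigr) = \tv\bigl(Q_\star^n(\cdot|\Z), Q^n(\cdot|\Z)\bigr).
\]
I then take the binary statistic $T(\X, \Y, \Z) := \One{\X \in A(\Z)}$. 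Since $T$ is a fixed function of $(\X, \Z)$ alone, it satisfies the theorem's requirement that $T$ be chosen independently of $\X$ and $\Y$.

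Next I would analyse the p-value under this $T$. Write $p_\star := Q_\star^n(A(\Z)|\Z)$ and $q := Q^n(A(\Z)|\Z)$, so $p_\star - q = \tv(Q_\star^n(\cdot|\Z), Q^n(\cdot|\Z))$. Under $H_0$ with true conditional $Q_\star$, we have $\X \sim Q_\star^n(\cdot | \Z)$ and $\X^{(1)}, \ldots, \X^{(M)} \iidsim Q^n(\cdot|\Z)$, all mutually independent conditional on $\Y, \Z$. Consequently $K := \sum_{m=1}^M \One{\X^{(m)} \in A(\Z)}$ is Binomial$(M, q)$ given $\Y, \Z$ and independent of $\One{\X \in A(\Z)}$. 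Inspection of~\eqref{eqn:pval_CPT} shows that the CRT p-value equals $(1+K)/(M+1)$ on the event $\{\X \in A(\Z)\}$ and equals $1$ off this event, giving, for every $\alpha < 1$,
\[
\PPst{p \le \alpha}{\Y, \Z} = p_\star \cdot \PPst{K \le \alpha(M+1) - 1}{\Y, \Z}.
\]

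I would then take $c = 1/2$, set $k^\star := \lfloor qM + c \sqrt{M \log M} \rfloor$, and consider the threshold $\alpha^\star := (k^\star + 1)/(M+1) \le q + c \sqrt{\log M / M} + O(1/M)$. Bernstein's inequality for a centred Binomial gives
\[
\PPst{K \ge qM + t}{\Y, \Z} \le \exp\left( \frac{-t^2/2}{M q(1-q) + t/3} \right),
\]
and plugging in $t = c\sqrt{M \log M}$ together with the universal bound $q(1-q) \le 1/4$ yields $\PPst{K > k^\star}{\Y, \Z} \le M^{-2c^2 + o(1)} = M^{-1/2 + o(1)}$. Since $p_\star \le 1$, this gives $\PPst{p \le \alpha^\star}{\Y, \Z} \ge p_\star - M^{-1/2 + o(1)}$, so
\[
\PPst{p \le \alpha^\star}{\Y, \Z} - \alpha^\star \ge (p_\star - q) - c\sqrt{\log M / M} - o\bigl(\sqrt{\log M / M}\bigr),
\]
which rearranges to the claimed bound $\tv(Q_\star^n(\cdot | \Z), Q^n(\cdot | \Z)) - 0.5(1 + o(1))\sqrt{\log M / M}$.

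The main obstacle is the sharp binomial concentration in the third step: plain Hoeffding only delivers the constant $1/\sqrt{2} \approx 0.707$, so recovering the advertised $0.5$ requires the variance-aware Bernstein (or Bennett) bound, which exploits that the binomial variance is at most $M/4$. A minor point is the uniformity in $q$: when $q$ is close to $0$ or $1$ the variance $Mq(1-q)$ is small, but then the linear term $t/3$ dominates the Bernstein denominator and the tail decays even faster, so the choice $c = 1/2$ works uniformly over $q \in [0,1]$. The non-asymptotic constant $2.5$ alluded to in the footnote would come from the same construction by keeping explicit constants in Bernstein's inequality rather than passing to the limit in $M$.
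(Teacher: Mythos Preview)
Your proposal is correct and follows essentially the same route as the paper's proof: take the indicator statistic $T(\X,\Y,\Z)=\One{\X\in A(\Z)}$ for a set $A(\Z)$ achieving the total variation distance, place the threshold $\alpha$ at roughly $q+\tfrac12\sqrt{\log M/M}$, and control the binomial upper tail via a variance-aware concentration bound that exploits $q(1-q)\le 1/4$. The only cosmetic difference is that you invoke Bernstein's inequality where the paper uses Bennett's; both deliver the leading constant $1/2$, and the footnoted non-asymptotic constant $2.5$ indeed comes (in the paper) from a numerical check on the Bennett tail rather than a limit.
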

In other words, if we use the statistic $T$ that is best able to detect errors in our conditional distribution, and choose $\alpha$ adversarially, then the excess Type~I error of the CRT is exactly equal to $ \tv\big(Q_\star^n(\cdot|\Z),Q^n(\cdot|\Z)\big)$ (up to a vanishing factor), and therefore is at least as high as that of the CPT under {\em any} statistic. 

Unlike for the CRT, we have found that there is no simple characterization of the worst-case scenario for the CPT.
In particular, for some specially constructed distributions on $(X,Y,Z)$, we can show that the CPT achieves the same lower bound as given in Theorem~\ref{thm:robust_lowerbd}
for the CRT (again, under a worst-case choice of the statistic $T$), but for other joint distributions on $(X,Y,Z)$ we can verify that the CPT cannot
achieve this error rate. In particular, since the CPT is invariant to the base measure (as discussed in Section~\ref{sec:compare_CPT_CRT}),
if $Q(\cdot|z)$ is correct up to the base measure, then the excess Type I error of CRT may be as large as $\tv\big(Q_\star^n(\cdot|\Z),Q^n(\cdot|\Z)\big)$
while the CPT is guaranteed to control Type I error at level $\alpha$.

It is important to note that the lower bound for the CRT in Theorem~\ref{thm:robust_lowerbd} applies only to a specific worst-case statistic
$T$, and does not guarantee that the excess error of the CRT will bound that of the CPT when both tests use some other statistic $T$.
However, in Section~\ref{sec:empirical} we will see that empirically, the CPT often yields a far lower Type~I error than the CRT in simulations.
Thus, we interpret Theorem~\ref{thm:robust_lowerbd} as giving us a partial theoretical understanding of this phenomenon,
since it only addresses the worst-case statistic.

\subsection{When is the total variation distance small?}
In order for Theorem~\ref{thm:robust} to have practical implications, we need to verify that there are settings where,
although the true distribution $Q_\star$ of $X|Z$ is unknown, it can be estimated to high accuracy, with $\tv\big(Q_\star^n(\cdot|\Z),Q^n(\cdot|\Z)\big) = o_p(1)$
(so that excess Type I error is guaranteed to be small). As discussed in Section~\ref{sec:CRT},
in many applications we may have a large unlabeled data set, say $(X^{\textnormal{unlab}}_i,Z^{\textnormal{unlab}}_i),i=1,\dots,N$,
with which we can compute an estimate $Q$ of $Q_\star$. 
(In fact, as discussed by \citet{barber2019construction} in the setting of model-X knockoffs, the unlabeled data set does not need
to have the same distribution over $(X,Z)$ as the labeled data, as long as the conditional distribution of $X|Z$ is the same.)

In this section, we briefly sketch two settings
where, given a large unlabeled sample size $N$, our estimate $Q$ is likely to satisfy $\tv\big(Q_\star^n(\cdot|\Z),Q^n(\cdot|\Z)\big) = o_p(1)$.
Our results here are stated informally, with no technical details, since we aim only to give intuition
for the settings where Theorem~\ref{thm:robust} is useful.

\paragraph{Parametric setting}
We will use Pinsker's inequality relating total variation distance to the Kullback--Leibler divergence, namely,
\[\tv^2\big(Q_\star^n(\cdot|\Z),Q^n(\cdot|\Z)\big)  \leq \frac{1}{2}\kl\big(Q_\star^n(\cdot|\Z),Q^n(\cdot|\Z)\big)  = \frac{1}{2}\sum_{i=1}^n \kl\Big(Q_\star(\cdot|Z_i),Q(\cdot|Z_i)\Big).\]
It is therefore sufficient to show that $\sum_{i=1}^n \kl\Big(Q_\star(\cdot|Z_i),Q(\cdot|Z_i)\Big)= o_p(1)$.

In fact, if the true conditional distribution $Q_\star(\cdot|z)$ belongs to a parametric family, then this will typically hold
whenever the unlabeled sample size satisfies $N \gg n\cdot k$, where $k$ is the number of parameters defining the models in the family. 
Specifically, we can think of a setting where  $Q_\star(\cdot|z)$ has density $f_{\theta_\star}(\cdot|z)$, where $\theta_\star\in\R^k$ is the unknown parameter vector while 
the family of densities $f_\theta(\cdot|z)$ is known. For example, suppose that $\Zcal = \R^{k-1}$, and  the conditional distribution of $X|Z$ is given by
\[X|Z=z \sim \mathcal{N}(z^\top\beta_\star, \sigma_\star^2).\]
Then the unknown parameters are $\theta_\star = (\beta_\star,\sigma_\star^2)$ and standard least squares theory allows us to produce independent (maximum likelihood) estimates $\widehat\beta,\widehat\sigma^2$ satisfying
\[
	\widehat\beta \sim N_{k-1} \bigl( \beta_\star, \sigma_\star^2(\Z_{\textnormal{unlab}}^\top \Z_{\textnormal{unlab}})^{-1} \bigr), \quad \widehat\sigma^2 \sim \frac{\sigma_\star^2}{N} \chi_{N-k+1}^2, %\big\|\widehat\beta - \beta_\star\big\|^2_2 \green{\asymp} \frac{k-1}{N}, \ \big|\widehat\sigma^2 - \sigma_\star^2\big|\asymp \frac{1}{N\green{^{1/2}}},
\]
where $\Z_{\textnormal{unlab}}$ is the $N \times (k-1)$ matrix with $i$th row $Z_i^{\textnormal{unlab}}$.  Thus, for any $z \in \mathcal{Z}$,
\begin{align*}
	\kl\Big(Q_\star(\cdot|z),Q(\cdot|z)\Big)  &= \kl\Big(\mathcal{N}(z^\top\beta_\star,\sigma_\star^2),\mathcal{N}(z^\top\widehat\beta,\widehat\sigma^2)\Big) \\
	& = \log \frac{\widehat\sigma}{\sigma_\star} + \frac{\sigma_\star^2 }{2\widehat\sigma^2 } - \frac{1}{2} + \frac{(z^\top\widehat\beta-z^\top\beta_\star)^2}{2\widehat\sigma^2} 
                                                  %&=O_p \biggl( \Bigl( \frac{\widehat\sigma^2}{\sigma_\star^2} -1 \Bigr)^2 + \frac{\|z\|^2 \| \widehat\beta - \beta_\star\|^2}{\sigma_\star^2} \biggr) =
                                                    = O_p \biggl( \frac{1 + \|z\|^2}{N} \biggr)
\end{align*}
under mild conditions on the distribution of $Z$. Putting everything together, if $Z$ has a finite second moment we then have
\[\tv\big(Q_\star^n(\cdot|\Z),Q^n(\cdot|\Z)\big)  = O_p \biggl( \sqrt{n \cdot \frac{k}{N}} \biggr),\]
which is vanishing as long as the unlabeled sample size satisfies $N \gg n\cdot k$.

\paragraph{Nonparametric setting with binary data}
As a second example, suppose that $\Xcal = \{0,1\}$, so that estimating $Q_\star(\cdot|z)$ is equivalent to estimating the regression function $p_\star(z) \coloneqq \PPst{X=1}{Z=z}$.
Assuming that this probability is bounded away from 0 and 1, and again applying Pinsker's inequality, we see that, under mild conditions,
\[\tv^2\big(Q_\star^n(\cdot|\Z),Q^n(\cdot|\Z)\big)  \leq \frac{1}{2}\sum_{i=1}^n \kl\Big(Q_\star(\cdot|Z_i),Q(\cdot|Z_i)\Big) \asymp \sum_{i=1}^n \Big(\widehat{p}(Z_i) - p_\star(Z_i)\Big)^2,\]
where $\widehat{p}(z)$ is our estimate of $p_\star(z) = \PPst{X=1}{Z=z}$ based on the unlabeled sample.

Since we are working in a nonparametric setting, 
suppose that we estimate $p_\star(z) = \PPst{X=1}{Z=z}$ via a kernel method, working in a low-dimensional space $\Zcal = \R^k$.
Then standard nonparametric theory ensures that, at ``most'' values $z$, we can achieve error
\[\big(\widehat{p}(z) - p_\star(z)\big)^2 \sim N^{-a_k},\]
where the exponent $a_k$ is a small positive value, depending on both the ambient dimension $k$ and the properties of the function $z\mapsto p_\star(z)$ (e.g.,
smoothness or Lipschitz properties). 
Therefore, we can expect to have 
\[\tv\big(Q_\star^n(\cdot|\Z),Q^n(\cdot|\Z)\big)  \lesssim \sqrt{n \cdot N^{-a_k}},\]
which is vanishing whenever the unlabeled sample size $N$ is sufficiently large relative to the labeled sample size $n$.

\section{Empirical results}\label{sec:empirical}

We next examine the empirical performance of the CPT and CRT on simulated data, and on real data from the Capital Bikeshare system. Code for reproducing all experiments is available on the authors' websites.\footnote{Available at \url{http://www.stat.uchicago.edu/~rina/cpt.html}.}

\subsection{Simulated data: power and error control}
The results of Section~\ref{sec:robust} show that the CPT is more robust than the CRT to errors in the estimated conditional distribution $Q(\cdot|Z)$, when the worst case test statistics $T(\X,\Y,\Z)$ are used. Our first aim here is to provide evidence to validate this result, and to show that this extra robustness is not only exhibited by the worst case test statistic but also for practical and simple choices of $T$. Our second aim is to examine the power of the CPT and CRT to detect deviations from the null hypothesis.

In all of our simulations we set $\alpha = 0.05$ as the desired Type I error rate, and use marginal absolute correlation $T(\X,\Y,\Z) = |\textnormal{Corr}(\X,\Y)|$ as our test statistic. We generate $M=500$ copies of $\X$ under either CPT or CRT. To run the CPT, we use Algorithm~\ref{alg:MC_exchangeable} with $S=50$ steps. All results are shown averaged over 1000 trials.

\subsubsection{Simulations under the null}
First we test whether the CPT and CRT show large increases in Type I error when the conditional distribution estimate $Q(\cdot|Z)$ is incorrect, in a setting where the null hypothesis $H_0:X\independent Y|Z$ holds. 

We will have $X,Y\in\R$ and $Z\in\R^p$ for $p=20$. We first draw independent parameter vectors
\[a,b \sim\mathcal{N}_p(0,\mathbf{I}_p).\]
The variables $(X,Y,Z)$ are then generated as
\[Z\sim \mathcal{N}_p(0,\mathbf{I}_p), \ X|Z\sim Q_\star(\cdot|Z), \ Y|X,Z \sim \mathcal{N}(p^{-1}a^\top Z, 1),\]
where $Q_\star(\cdot|Z)$ will be specified below. (Note that $Y|X,Z$ depends on $Z$ only, since we are working under the null hypothesis that $X\independent Y|Z$.)

Throughout, the estimated conditional distribution of $X|Z$ will be given by $Q(\cdot|Z) = \mathcal{N}(b^\top Z,1)$, but this estimate might not be exactly correct. We will consider several different sources of error in this model:
\begin{enumerate}
\item Nonlinear mean. One source of error comes from assuming a linear relationship between variables where this is in fact not the case.
We choose sample size $n=50$, and  try three different simple examples, taking $Q_\star(\cdot | z) = \mathcal{N}(\mu(z),1)$, where $\mu(\cdot)$ is given by:
\begin{enumerate}
\item Quadratic: $\mu(z)=b^\top z + \theta (b^\top z)^2$,
\item Cubic: $\mu(z)=b^\top z - \theta (b^\top z)^3$,
\item Tanh: $\mu(z)=\tanh(\theta\cdot b^\top z)/\theta$.
\end{enumerate}
In each case, $\theta\geq 0$ is the model misspecification parameter. Note that $\theta=0$ corresponds to the case that $Q(\cdot|Z)=Q_\star(\cdot|Z)$, i.e., the estimate is indeed correct, while larger values of $\theta$ correspond to increasing errors.

\item Coefficients estimated on unlabeled data. Even if the form of the model for $X|Z$ is correct, 
the coefficients $b$ may not be known perfectly. As described earlier, in many practical settings we may have access to ample unlabeled data $(X,Z)$, separate
from our labeled data set of points $(X,Y,Z)$ used to test the hypothesis of conditional independence.
For this setting, we estimate the unknown coefficient vector $b$ with $\widehat{b}$, 
defined as the least-squares estimate using an unlabeled sample $(X^{\textnormal{unlab}}_i,Z^{\textnormal{unlab}}_i),i=1,\dots,N$,
generated independently of the data points $(X_i,Y_i,Z_i)$.
This experiment is repeated for unlabeled sample sizes $N=50,100,\dots,500$. The labeled sample size is given by $n=50$ in each case.
\item Coefficients estimated by reusing the data. Finally, in settings where unlabeled data may not be available,
we may be tempted to estimate the model of $X|Z$ simply using our data points $(X_i,Y_i,Z_i),i=1,\dots,n$. This approach 
is not covered by our theory (since the conditional distribution $Q(X|Z)$ is data-dependent in this case), but it is certainly of practical interest
to see how the method performs in this setting. We test sample sizes $n=50,100,\dots,500$, 
in each case estimating the unknown true coefficient vector $b$ with $\widehat{b}$, which in this case is now
given by the least-squares regression of $X$ on $Z$ trained on the {\em same} data set, $(X_1,Z_1),\dots,(X_n,Z_n)$.
\end{enumerate}

\paragraph{Results} The plots in Figures~\ref{fig:nonlinear_mean} and~\ref{fig:estimate_Q} show the results of these experiments when we have a nonlinear mean, and when we estimate the coefficients using unlabeled data or reusing data, respectively. As the null hypothesis, $H_0:X\independent Y|Z$, is true in all of these experiments, we would hope for the probability of rejection to be close to the nominal level of $\alpha=0.05$, at least when the model misspecification parameter $\theta$ is not too large (for the nonlinear mean setting) or when the unlabeled sample size $N$ or labeled sample size $n$ is not too small (when the model coefficients are trained on unlabeled data or reused data). 

For the nonlinear mean experiments, in Figure~\ref{fig:nonlinear_mean} we see that in many cases the CPT is significantly more robust than the CRT. The $\theta=0$ cases confirm that both tests achieve the nominal Type I error level $\alpha=0.05$ when the assumed distribution $Q$ is correct. As the misspecification parameter
$\theta$ increases (so that the model $Q(\cdot|z)$ that we use for running CPT or CRT, grows farther from the true model $Q_\star(\cdot|z)$),
we see that both methods suffer an inflation of the Type I error level, but for the CPT the excess Type I error is substantially lower than that of the CRT. 

Next, we turn to the setting where the estimated model $Q(\cdot|z)$ is obtained by regressing $X$ on $Z$ using either a separate unlabeled data set,
shown in Figure~\ref{fig:estimate_Q_unlabeled}, or by reusing the same data set, shown in Figure~\ref{fig:estimate_Q_reuse}.
The results are encouraging, showing that, when using unlabeled data, the Type I error is already very close to the nominal level as soon
as the unlabeled sample size $N$ is larger than $n$. When reusing the data, the method in fact appears to be somewhat conservative at smaller
sample sizes $n$---the cause of this phenomenon is an interesting question we hope to study in future work.

\begin{figure}
\centering
\begin{subfigure}{.36\textwidth}
  \centering
  \includegraphics[width=\linewidth]{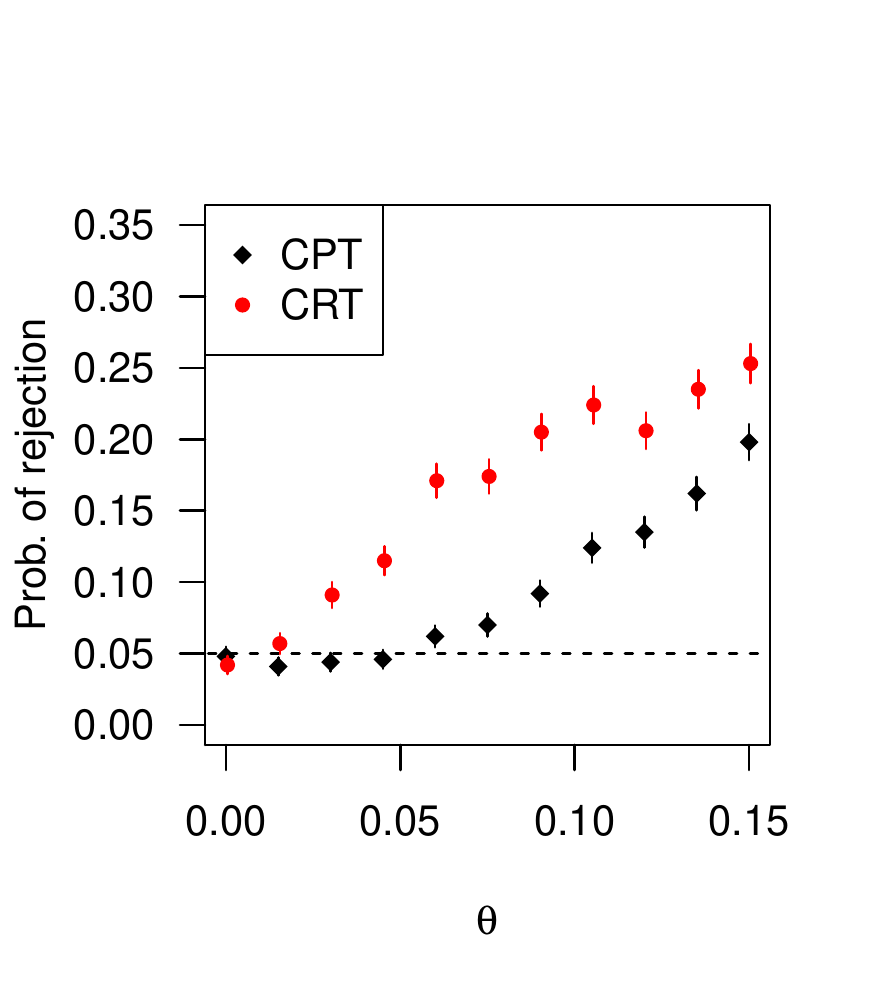}
  \caption{Quadratic}
\end{subfigure}\hspace{-.3in}
\begin{subfigure}{.36\textwidth}
  \centering
  \includegraphics[width=\linewidth]{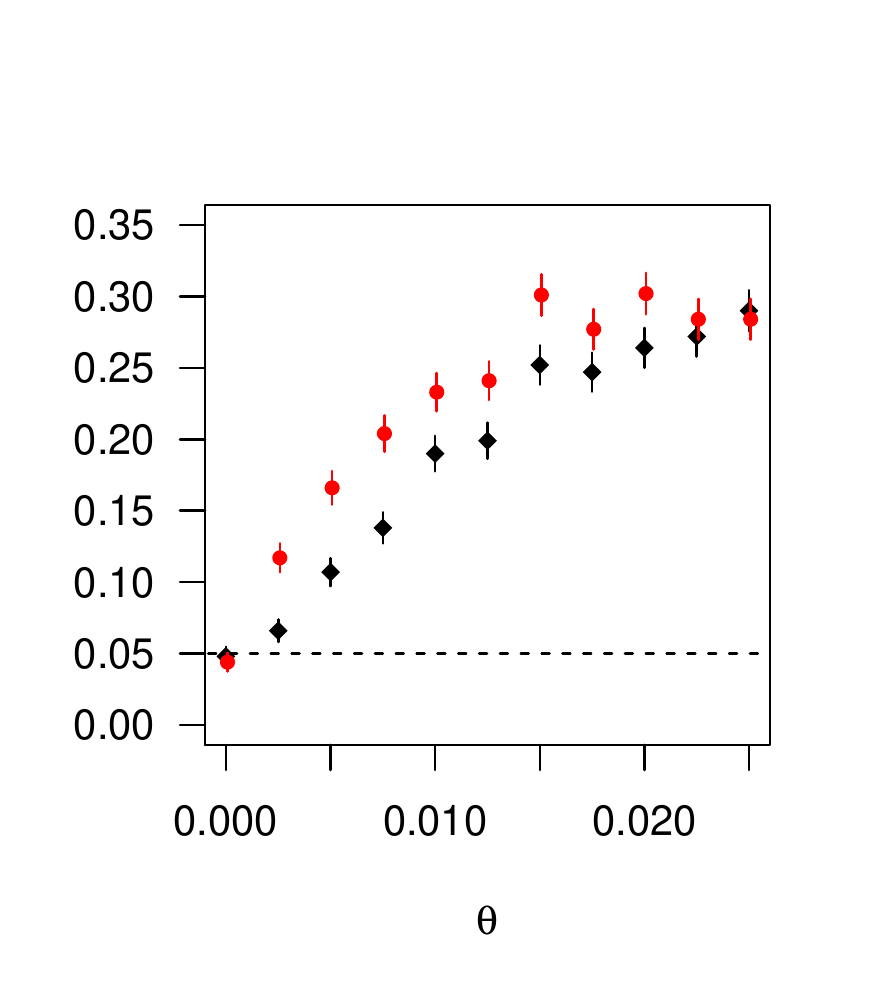}
  \caption{Cubic}
\end{subfigure}\hspace{-.3in}
\begin{subfigure}{.36\textwidth}
  \centering
  \includegraphics[width=\linewidth]{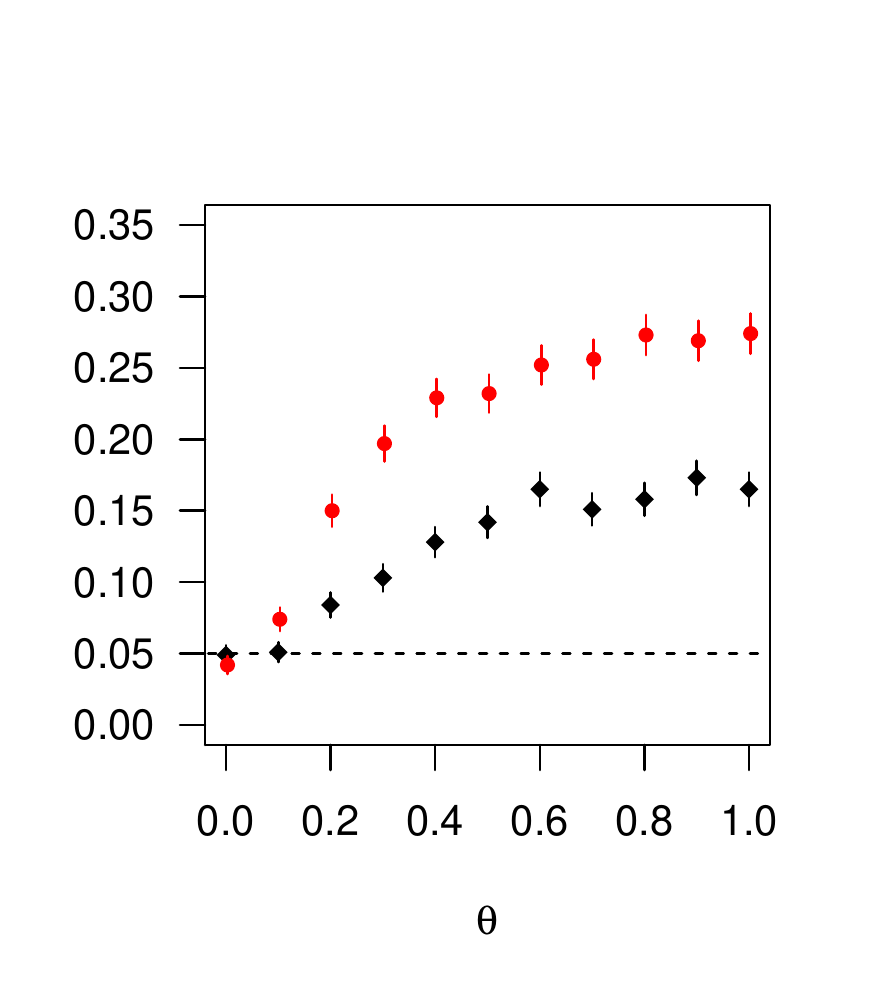}
  \caption{Tanh}
\end{subfigure}
\caption{Simulation results for robustness to misspecification of the mean function. The figures show the probability of rejection (i.e., the Type I error rate), plotted against the model misspecification parameter $\theta$. The plots show the average rejection probability with standard error bars computed over 1000 trials for the CPT and CRT. The dashed line indicates the nominal level $\alpha = 0.05$.}
\label{fig:nonlinear_mean}
\end{figure}

\begin{figure}
\centering
\begin{subfigure}{.4\textwidth}
  \centering
  \includegraphics[width=\linewidth]{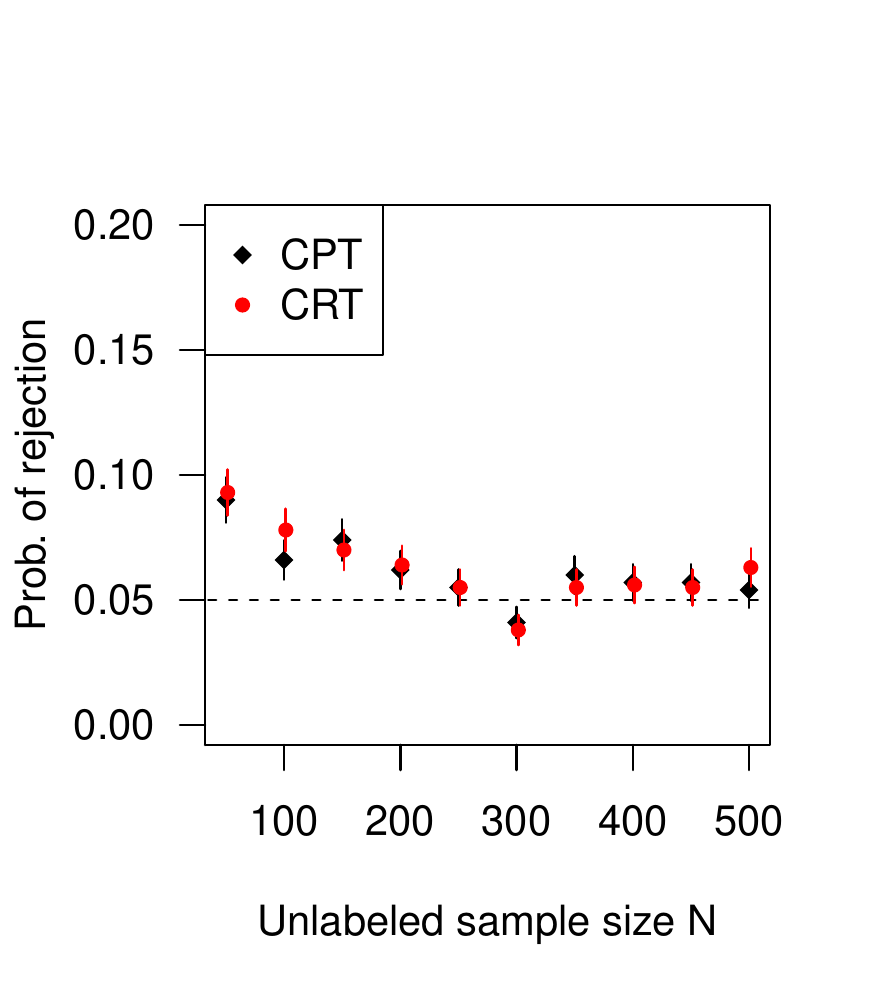}
  \caption{Model trained on unlabeled data}\label{fig:estimate_Q_unlabeled}
\end{subfigure}\hspace{.5in}
\begin{subfigure}{.4\textwidth}
  \centering
  \includegraphics[width=\linewidth]{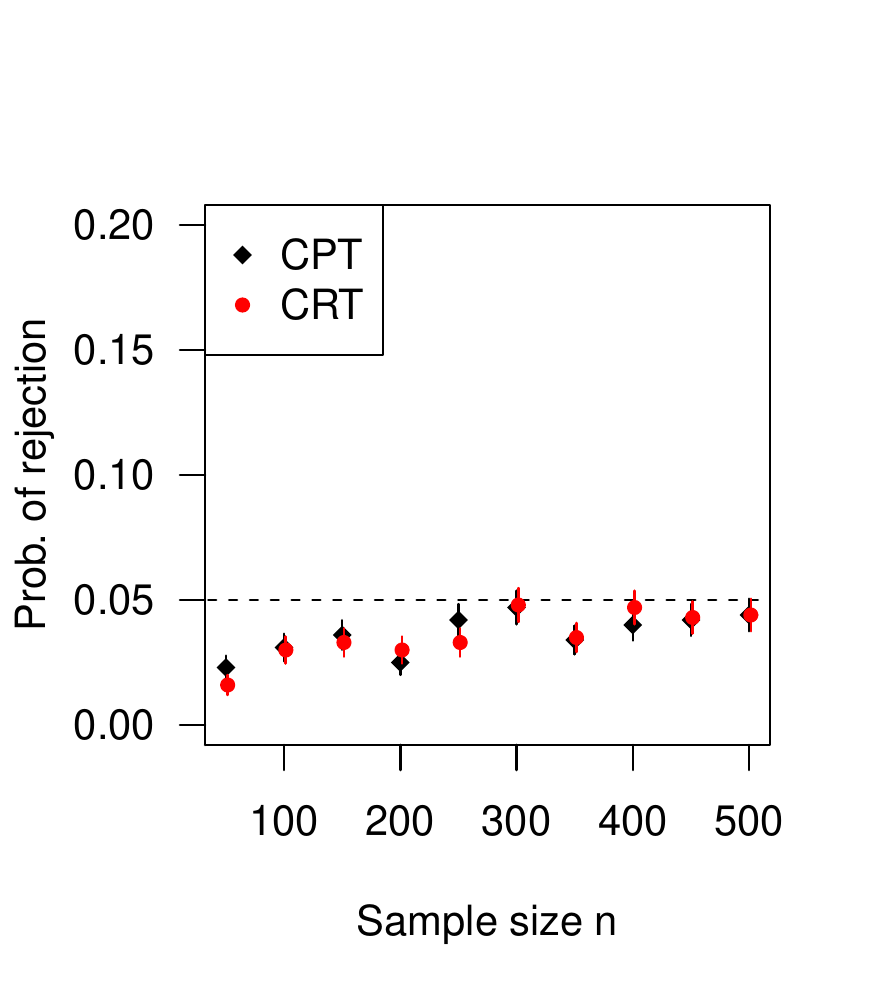}
  \caption{Model trained by reusing data}\label{fig:estimate_Q_reuse}
\end{subfigure}
\caption{Simulation results for robustness to models trained on unlabeled data or by reusing the data. Details as for Figure~\ref{fig:nonlinear_mean}.}
\label{fig:estimate_Q}
\end{figure}

\subsubsection{Simulations under the alternative}\label{sec:sim_alt}
Our final simulation concerns the power of the tests. Here we generate $Z$ as before, and generate $X|Z \sim \mathcal{N}(b^\top Z,1)$, exactly according the assumed distribution $Q(\cdot|Z)$, so that both tests have the nominal Type I error level $\alpha = 0.05$. Unlike the null setting, we now generate $Y|X,Z\sim \mathcal{N}(a^\top Z + c X,1)$. The strength of the signal is controlled by the parameter $c\geq 0$, where $c=0$ corresponds to the null hypothesis being true while larger values of $c$ move farther away from the null. The results, shown in Figure~\ref{fig:power}, reveal that the CPT is slightly less powerful than the CRT across a range of values of $c$, but overall shows fairly similar performance.  Thus there is only a small price to pay for the additional robustness of the CPT.

\begin{figure}
\centering
\includegraphics[width=.4\textwidth]{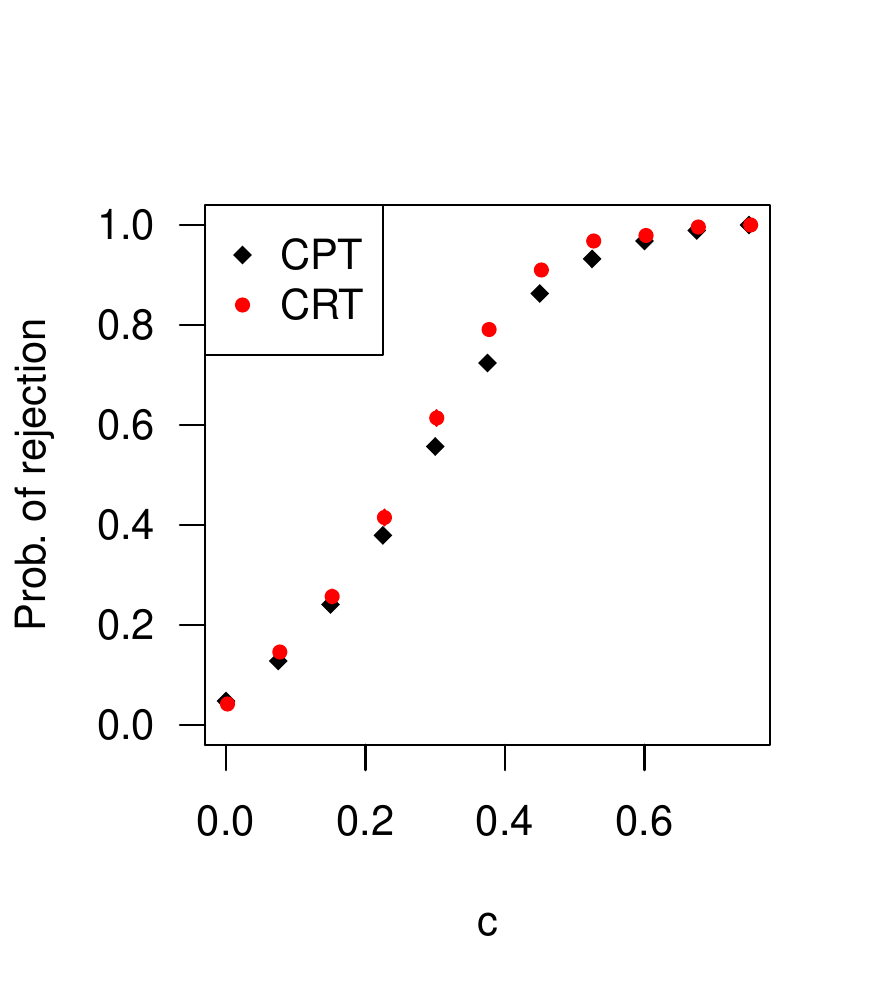}
\caption{Simulation results testing power under the alternative. The figures show the probability of rejection (i.e., the power), plotted against the signal strength parameter $c$. The plots show the average rejection probability with standard error bars computed over 1000 trials for the CPT and CRT. The tests are run at level $\alpha = 0.05$.}
\label{fig:power} 
\end{figure}

\subsection{Simulated data: mixing of the CPT sampler}

\begin{figure}
\centering
\begin{subfigure}{.4\textwidth}
  \centering
  \includegraphics[width=\linewidth]{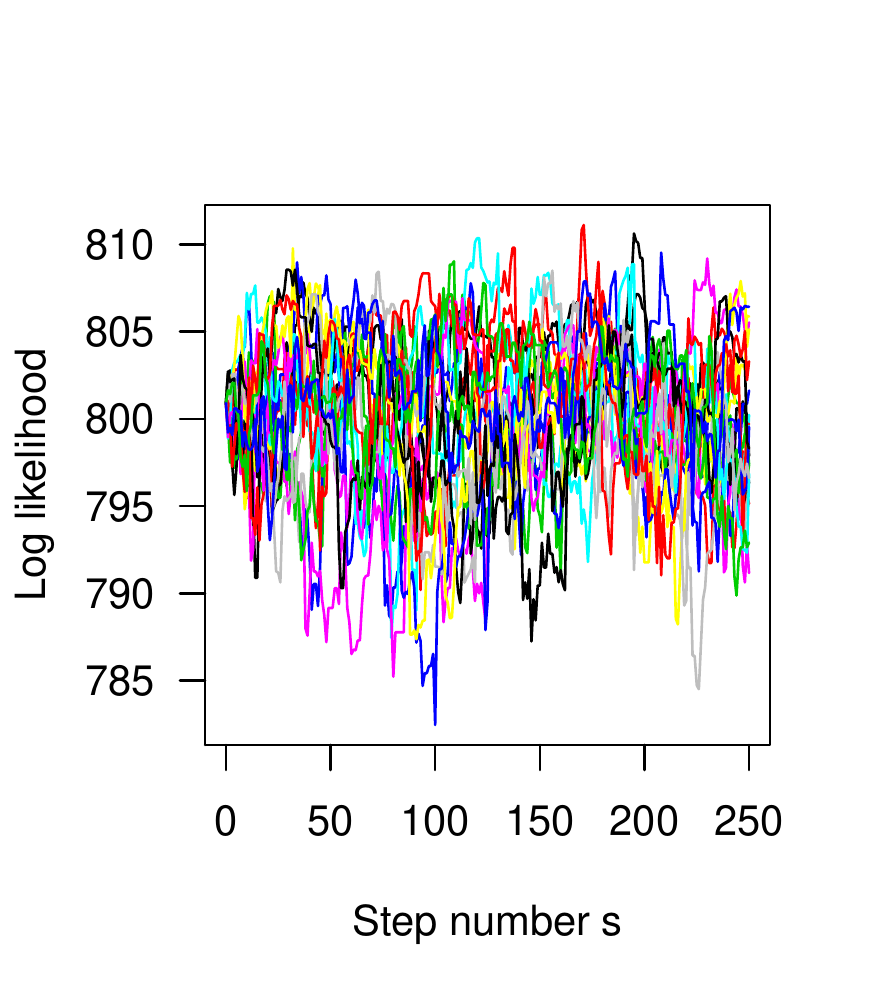}
  \caption{Log-likelihood of $\X^{[s]}$}
\end{subfigure}\hspace{.5in}
\begin{subfigure}{.4\textwidth}
  \centering
  \includegraphics[width=\linewidth]{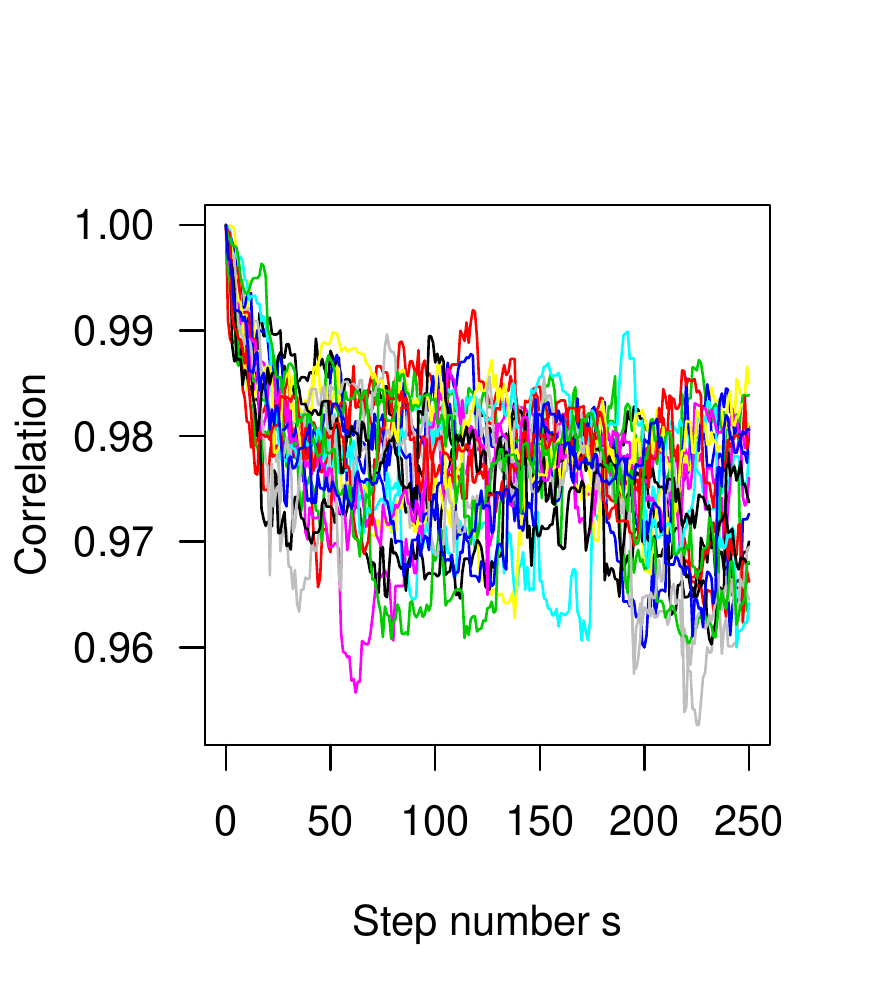}
  \caption{Corr$(\X,\X^{[s]})$}
\end{subfigure}
\caption{Simulation results showing trace plots for the CPT sampler, examining the CPT copy $\X^{[s]}$ at step $s$ of Algorithm~\ref{alg:MC_parallel}.}
\label{fig:traceplots}
\end{figure}

In practice, we cannot implement  the CPT method as defined in~\eqref{eqn:CPT_sample} (unless, of course, the sample size $n$ is so small
that we can simply enumerate all $n!$ possible permutations). Instead, in our experiments,
we use the exchangeable MCMC sampler, defined in Algorithm~\ref{alg:MC_exchangeable}. All of our simulations and real data experiments
implement this sampler with $S=50$, meaning that the Markov chain is run for 50 steps for each new permuted copy $\X^{(m)}$ of the data.
Is this moderate number of steps sufficient to ensure that the chain has mixed well, or are we producing highly correlated data
that will lead to reduced power? To examine this question, we generate one data set, consisting of confounders $Z$ and feature $X$ generated
exactly as in Section~\ref{sec:sim_alt}, and then run the parallel pairwise sampler (Algorithm~\ref{alg:MC_parallel})
independently for 20 trials (i.e., each time initializing at the same original data). At each iteration, setting $\X^{[s]} = \X_{(\Pi^{[s]})}$ to be our current
CPT copy of the original data vector $\X$,  we track the log-likelihood,
 $\sum_{i=1}^n q(X^{[s]}_i|Z_i)$, and the correlation with the original data vector, Corr$(\X,\X^{[s]})$. (Note that, since $X$ is strongly dependent with $Z$,
 it is to be expected that two draws of the data, i.e., $\X$ and $\X^{[s]}$, will necessarily have a high correlation.)
 The trace plots of these two quantities, plotted over $s=0,1,2,\dots,250$ in Figure~\ref{fig:traceplots}, demonstrate that, in this simulation, the Markov chain appears to mix
quickly, within about 50 or 100 iterations. Of course, this will be affected by factors such as the strength of the dependence between $X$ and $Z$, and the sample size $n$.

\subsection{Capital bikeshare data set}
We next implement the CPT and CRT on the Capital Bikeshare data set.\footnote{Data obtained from \url{https://www.capitalbikeshare.com/system-data}.} Capital Bikeshare is a bike sharing program in Washington, D.C., where users may check out a bike from one of their locations and return at any other location. The data set contains each ride ever taken, recording the start time and location, end time and location, bike ID number, and a user type which can be ``Member'' (i.e., purchasing a long-term membership in the system) or ``Casual'' (i.e., paying for one-time rental or a short-term pass). We use the following data:
\begin{itemize}
\item Test data set: all rides taken on weekdays (Monday through Friday) in October 2011. Sample size $n$ = 7,346 rides, after an initial screening step (details below).
\item Training data set (for fitting the conditional distribution $Q(\cdot|Z)$): all rides taken on weekdays in September 2011 and November 2011. Sample size $n_{\textnormal{train}}$ = 149,912 rides.
\end{itemize}
In our experiments, we are interested in determining whether the duration, $X$, of the ride is dependent on various factors $Y$, such as user type (``Member'' or ``Casual''). Of course, the duration of the ride will be heavily dependent on the length of the route, in addition to other factors, and so to control for this we let $Z$ encode both the route, i.e., the start and end locations, as well as the time of day at the start of the ride, since varying traffic might also affect the speed of the ride.  

In order to implement the CPT and CRT, we will use a conditional normal distribution, i.e., $(X|Z=z)\sim \mathcal{N}(\mu(z),\sigma^2(z))$ as an estimate $Q(\cdot|z)$ of $Q_\star(\cdot|z)$.  Before running the CPT or CRT, as an initial screening step we discard any test points for which we do not have a good estimate of the conditional distribution of $X$, keeping only  those test data points where we have ample training data for rides taken along the same route and at similar times of day. The details for fitting $Q(\cdot|Z)$, and for this initial screening step, are given in Appendix~\ref{app:bikeshare_details}. For both the CPT and CRT, we sample $M=1000$ copies of $\X$ to produce the p-value. For the CPT, the Monte Carlo sampler given in Algorithm~\ref{alg:MC_exchangeable} is run with $S=50$ as the number of steps for producing each copy.

\paragraph{Results} We test the null hypothesis $H_0:X\independent Y|Z$ for several different choices of the response $Y$:
\begin{itemize}
\item User type (``Member'' or ``Casual''). We might expect that ``Casual'' users, who are likely to be tourists or infrequent bike riders, may ride at a slower speed.
\item Date, treated as continuous. Since the test data set is taken from the single month October 2011, the date of this month is a continuous variable that acts as a proxy for factors such as weather and the time of sunrise and sunset.
\item Day of the week (Monday through Friday), treated as categorical. Bike riders' behavior may differ on different days of the week, for instance, if rides on Friday are more likely to be leisure rides than the other days of the week.
\end{itemize}
For user type and date, the statistic $T(\X,\Y,\Z)$ that we use is the correlation between the vector $\Y$, and the vector of ride duration residuals after controlling for the effects of $Z$---in other words, the vector with entries $R_i = X_i - \Ep{X\sim Q(\cdot|Z_i)}{X}$. For day of the week, our statistic $T(\X,\Y,\Z)$ is given by 
\[\max_{y\in\{\textnormal{Mon},\dots,\textnormal{Fri}\}} \big|\text{Correlation between $(R_1,\dots,R_n)$ and $(\One{Y_1=y},\dots,\One{Y_n=y})$}\big|.\]

\begin{table}
\begin{center}
\begin{tabular}{|c|c|c|}
\hline
Variable $Y$ &CPT p-value (std.~err.)& CRT p-value (std.~err.)\\\hline
User type&   0.0010 (0.0000) & 0.0010 (0.0000)\\\hline
Date& 0.1146 (0.0032)& 0.1293 (0.0032) \\\hline
Day of week& 0.1980 (0.0037)& 0.2063 (0.0032) \\\hline
\end{tabular}
\caption{p-values obtained from the CPT and CRT for the Capital Bikeshare data. The mean p-value and standard error are calculated from 10 trials of each experiment (the randomness comes from the construction of the copies $\X^{(m)}$ for each test).}
\label{tab:bikeshare}
\end{center}
\end{table}
Table~\ref{tab:bikeshare} shows the resulting p-values for each choice of the variable $Y$. We can see that the CPT and CRT produce nearly identical p-values in all three cases.  We conclude that the user type and duration of ride are dependent, even after controlling for our various confounding variables; on the other hand there is insufficient evidence to reach the same conclusion for the corresponding tests for the date and the day of the week.

\section{Discussion}\label{sec:discussion}
In this work, we have developed a conditional permutation test that modifies the standard permutation test of independence between $X$ and $Y$ in order to account for a known dependence of $X$ on potentially relevant confounding variables $Z$. Our theoretical results prove finite-sample Type~I error control, even when the distribution of $X|Z$ is not known exactly.

We have shown that, empirically, resampling from the set of observed $X$ values preserves better Type~I error control under mild errors in our model, and does not lose much power, in settings where we use intuitive statistics such as correlation between $Y$ and $X$ after regressing out the effects of $Z$. In contrast, our theoretical understanding of Type~I error control covers the worst-case scenario over all possible statistics, and it may be the case that the simple statistics used in practical analyses suffer much less inflation of the Type~I error. We hope to bridge this gap in future work, and also to provide some theoretical insight into the power of the CPT method, as well as to study the efficiency of the Monte Carlo sampler for the CPT and examine whether proposing swaps non-uniformly may improve the speed at which we can obtain copies $\X^{(m)}$ that are not too correlated with each other.

Furthermore, in many applications it might not be possible to estimate the conditional distribution of $X|Z$ independently of the data---if only a small labeled data set $(X,Y,Z)$ is available, with no additional unlabeled data $(X,Z)$ with which to estimate this distribution, we would of course have the option of splitting the data set to use one half for fitting $Q(X| Z)$ and the remaining half to run the CPT, but this would incur substantial loss of both Type I error control and power when the sample size is limited. It is therefore important to consider how the CPT (and the CRT) can retain their validity when the data is used for estimating $Q(X| Z)$ and then reused for testing $H_0:X\independent Y| Z$. It is possible that tools from the selective inference literature may allow us to develop theory towards addressing this question.

Finally, both the CPT and the CRT are based in a setting where it is assumed that modeling $X|Z$ is easy while modeling $Y| X,Z$ is hard---that is, our estimate $Q(\cdot | Z)$ of the conditional distribution $X|Z$ is assumed to be highly accurate, but testing $H_0:X\independent Y| Z$ is a substantial challenge. In contrast, many of the asymptotic tests described in Section~\ref{sec:lit_review} treat the $X$ and $Y$ variables symmetrically when testing $X\independent Y | Z$. Are there settings in which we can construct methods offering finite-sample guarantees in the style of the CPT and CRT while taking a more symmetric approach to this testing problem?

\appendix

\section{Proofs}\label{app:proofs}

\subsection{Proving validity of the sampling mechanisms}

\begin{proof}[Proof of Theorem~\ref{thm:MC}]
This proof consists of simply checking the detailed balance equations for the Markov chain defined by the algorithm.

Let $\mathcal{P}$ be the set of all partitions of $\{1,\ldots,n\}$ into $\lfloor n/2\rfloor$ disjoint pairs. For any $p\in\mathcal{P}$ and any permutations $\pi,\pi'$, we write $\pi\sim_p \pi'$ if $\pi$ can be transformed to $\pi'$ by swapping any subset of the pairs in the partition $p$. For example, if $(i,j), (k,\ell)$ are two of the disjoint pairs in the partition $p$, and $\pi$ and $\pi'$ are related via $\pi' = \pi\circ \sigma_{ij}\circ\sigma_{k\ell}$, then $\pi\sim_p \pi'$ (recall that $\sigma_{ij}$ is the permutation that swaps $i$ and $j$). We note that $\sim_p$ defines an equivalence relation on the set of permutations.

We now compute the transition probability matrix of the Markov chain defined by Algorithm~\ref{alg:MC_parallel}. For ease of notation, for the remainder of this proof, we will condition on $\X_{()},\Y,\Z$ implicitly. In particular, all probabilities $\PP{\cdot}$ or $\PP{\cdot | \cdot}$ should be interpreted as $\PPst{\cdot}{\X_{()},\Y,\Z}$ or $\PP{\cdot | \cdot\, , \X_{()},\Y,\Z}$.

For any permutations $\pi,\pi'$, we have
\[\PPst{\Pi^{[t]}=\pi'}{\Pi^{[t-1]}=\pi} \\= \frac{1}{|\mathcal{P}|}\sum_{p\in\mathcal{P}} \PPst{\Pi^{[t]}=\pi'}{\Pi^{[t-1]}=\pi, \text{ $t$th partition = $p$}},\]
since at each time $t$, Algorithm~\ref{alg:MC_parallel} begins by drawing a partition $p\in\mathcal{P}$ uniformly at random.
Next, given $p$ and $\Pi^{[t-1]}=\pi$, $\Pi^{[t]}$ must satisfy $\Pi^{[t]}\sim_p \pi$ by definition of the next step of the algorithm which can only swap pairs of indices in the partition $p$. By examining the odds ratio defined for each $B_{t,k}$ in~\eqref{eqn:MC_oddsratio}, we see that for any $\pi',\pi''\sim_p \pi$,
\[
\frac{\PPst{\Pi^{[t]}=\pi'}{\Pi^{[t-1]}=\pi, \text{ $t$th partition = $p$}}}{\PPst{\Pi^{[t]}=\pi''}{\Pi^{[t-1]}=\pi, \text{ $t$th partition = $p$}}}\\
=\prod_i \frac{q(X_{(\pi'(i))}|Z_i)}{q(X_{(\pi''(i))}|Z_i)}
=\frac{\PP{\Pi = \pi'}}{\PP{\Pi=\pi''}},
\]
where in the last step we refer to the distribution~\eqref{eqn:distrib_Pi} of the permutation $\Pi$ conditional on $\X_{()},\Y,\Z$. Therefore,
\[\PPst{\Pi^{[t]}=\pi'}{\Pi^{[t-1]}=\pi} = \frac{1}{|\mathcal{P}|}\sum_{p\in\mathcal{P}}\frac{\One{\pi'\sim_p \pi}\cdot \PP{\Pi = \pi'}}{\sum_{\pi''}\One{\pi'' \sim_p \pi}\cdot\PP{\Pi=\pi''}}.\]
Thus, for any $\pi,\pi'$, since $\sim_p$ forms an equivalence relation over permutations, we have
\begin{align*}
&\PP{\Pi = \pi}\cdot \PPst{\Pi^{[t]}=\pi'}{\Pi^{[t-1]}=\pi}\\
&=\frac{1}{|\mathcal{P}|}\sum_{p\in\mathcal{P}}\PP{\Pi = \pi}\cdot \frac{\One{\pi'\sim_p \pi}\cdot \PP{\Pi = \pi'}}{\sum_{\pi''}\One{\pi'' \sim_p \pi}\cdot\PP{\Pi=\pi''}}\\
&=\frac{1}{|\mathcal{P}|}\sum_{p\in\mathcal{P}}\PP{\Pi = \pi'}\cdot \frac{\One{\pi\sim_p \pi'}\cdot \PP{\Pi = \pi}}{\sum_{\pi''}\One{\pi'' \sim_p \pi')}\cdot\PP{\Pi=\pi''}}\\
&=\PP{\Pi = \pi'}\cdot \PPst{\Pi^{[t]}=\pi}{\Pi^{[t-1]}=\pi'}.
\end{align*}
This verifies the detailed balance equations, and so the Markov chain is reversible and has stationary distribution given by~\eqref{eqn:distrib_Pi}. Finally, it is trivial to see that this Markov chain is aperiodic and irreducible when $q(x|z)$ is positive for all $x\in\Xcal$ and $z\in\Zcal$, and so in this case, the stationary distribution is unique.
\end{proof}

\begin{proof}[Proof of Theorem~\ref{thm:MC_exchangeable}]
This result follows directly from the fact that the Markov chain defined in Algorithm~\ref{alg:MC_parallel} is reversible,
as shown in the proof of Theorem~\ref{thm:MC}. This means that, under $H_0$, the permutations $\Pi,\Pi_{\sharp},\Pi^{(1)},\dots,\Pi^{(M)}$ can equivalently be drawn as follows: first draw $\Pi_{\sharp}$ from the distribution~\eqref{eqn:distrib_Pi} conditional on $\X_{()},\Y,\Z$, then draw $\Pi,\Pi^{(1)},\dots,\Pi^{(M)}$ via $M+1$ independent runs of Algorithm~\ref{alg:MC_parallel} for $S$ steps initialized at $\Pi^{[0]}=\Pi_{\sharp}$. Thus $\Pi,\Pi^{(1)},\dots,\Pi^{(M)}$ are i.i.d.~conditional on $\Pi_{\sharp},\X_{()},\Y,\Z$, and are therefore exchangeable.
\end{proof}

\subsection{Proving robust Type~I error control}
\begin{proof}[Proof of Theorem~\ref{thm:robust}]
First we prove the result for the CRT. Let $\check{\X}$ be an additional copy drawn also from $Q(\cdot|\Z)$, independently of $\Y$ and of $\X,\X^{(1)},\dots,\X^{(M)}$. Then, since conditional on $\Y,\Z$ the copies $\X,\check{\X},\X^{(1)},\dots,\X^{(M)}$ are independent, we have
\begin{multline*}\tv\bigg(\Big((\X,\X^{(1)},\dots,\X^{(M)}) | \Y,\Z\Big),\Big( (\check{\X},\X^{(1)},\dots,\X^{(M)}) | \Y,\Z\Big)\bigg) \\
= \tv\Big((\X | \Y,\Z) ,(\check{\X}| \Y,\Z) \Big)= \tv\big(Q_\star^n(\cdot|\Z),Q^n(\cdot|\Z)\big).\end{multline*}
Now let $A_\alpha\subseteq (\Xcal^n)^{M+1}$ be defined as
\[A_\alpha := \biggl\{(\x,\x^{(1)},\dots,\x^{(M)}) : \frac{1+\sum_{m=1}^M\One{T(\x^{(m)},\Y,\Z)\geq T(\x,\Y,\Z)}}{1+M}\leq \alpha\biggr\},\]
i.e., the set where we would obtain a p-value $p\leq \alpha$. Then
\begin{align*}&\PPst{p\leq \alpha}{\Y,\Z} = \PPst{(\X,\X^{(1)},\dots,\X^{(M)}) \in A_\alpha}{\Y,\Z} \\
&\leq  \PPst{(\check{\X},\X^{(1)},\dots,\X^{(M)}) \in A_\alpha}{\Y,\Z} \\
&\hspace{1in}+ \tv\bigg(\Big((\X,\X^{(1)},\dots,\X^{(M)}) | \Y,\Z\Big),\Big( (\check{\X},\X^{(1)},\dots,\X^{(M)}) | \Y,\Z\Big)\bigg) \\
&=  \PPst{(\check{\X},\X^{(1)},\dots,\X^{(M)}) \in A_\alpha}{\Y,\Z} +  \tv\big(Q_\star^n(\cdot|\Z),Q^n(\cdot|\Z)\big).\end{align*}
Finally, since $\check{\X},\X^{(1)},\dots,\X^{(M)}$ are clearly \iid~after conditioning on $\Y,\Z$, and are therefore exchangeable, by definition of $A_\alpha$ we must have
\[ \PPst{(\check{\X},\X^{(1)},\dots,\X^{(M)}) \in A_\alpha}{\Y,\Z} \leq \alpha,\]
proving the desired bound for the CRT.

Next we turn to the CPT, for which the analysis is more complicated since the $\X^{(m)}$'s depend on the observed values in the vector $\X$. We will use the fact that,
\begin{equation}\label{eqn:tv_fact}\begin{tabular}{c}\text{For any $(U,V)$ and $(U',V')$, if $(V| U=u) \eqd (V'| U' = u)$ for any $u$,}\\\text{ then $\tv\Big((U,V),(U',V')\Big) = \tv(U,U')$.}\end{tabular}\end{equation}
Let $\check{\X}$ be drawn from $Q(\cdot|\Z)$, independently of $\Y$, and let $\check{\X}^{(1)},\dots,\check{\X}^{(M)}$ be draws from the CPT when we sample from the values of $\check{\X}$ instead of $\X$. That is, independently for each $m=1,\dots,M$, we draw
\[\check{\X}^{(m)}  = \check{\X}_{(\check{\Pi}^{(m)})}\text{ \ where \ }\PPst{\check{\Pi}^{(m)} = \pi}{\check{\X}_{()},\Y,\Z} \propto \ q^n(\check{\X}_{(\pi)}| \Z) ,\]
where $\check{\X}_{()}$ and $\check{\X}_{(\pi)}$ are defined analogously to $\X_{()}$ and $\X_{(\pi)}$ from Section~\ref{sec:CPT}. 
Next, by comparing to the CPT sampling mechanism~\eqref{eqn:CPT_sample_alt}, we observe that the $\check{\X}^{(m)}$'s, conditional on $\check{\X}$, are generated with the same mechanism as the $\X^{(m)}$'s conditional on $\X$. In other words, for any $\x\in\Xcal^n$, we have
\[\Big(\big(\check{\X}^{(1)},\dots,\check{\X}^{(M)}\big)| \check{\X}=\x,\Y,\Z\Big) \eqd \Big(\big(\X^{(1)},\dots,\X^{(M)}\big)| \X=\x,\Y,\Z\Big) .\]
We can verify that
the same equality in distribution
holds if we instead use the exchangeable sampler (Algorithm~\ref{alg:MC_exchangeable}) with some choice $S\geq 1$ of the number of steps.

In either case, then, applying~\eqref{eqn:tv_fact} we have
\begin{multline*}\tv\bigg(\Big((\X,\X^{(1)},\dots,\X^{(M)}) | \Y,\Z\Big),\Big( (\check{\X},\check{\X}^{(1)},\dots,\check{\X}^{(M)}) | \Y,\Z\Big)\bigg) \\
= \tv\Big((\X | \Y,\Z) ,(\check{\X}| \Y,\Z) \Big)= \tv\big(Q_\star^n(\cdot|\Z),Q^n(\cdot|\Z)\big).\end{multline*}
From this point on, we proceed as for the CRT---we have
\[\PPst{p\leq \alpha}{\Y,\Z}
\leq  \PPst{(\check{\X},\check{\X}^{(1)},\dots,\check{\X}^{(M)}) \in A_\alpha}{\Y,\Z} +  \tv\big(Q_\star^n(\cdot|\Z),Q^n(\cdot|\Z)\big),\]
and since $\check{\X},\check{\X}^{(1)},\dots,\check{\X}^{(M)}$ are exchangeable after conditioning on $\Y,\Z$, we see that $\PPst{(\check{\X},\check{\X}^{(1)},\dots,\check{\X}^{(M)}) \in A_\alpha}{\Y,\Z} \leq \alpha$, proving the desired bound for the CPT (with permutations drawn either i.i.d.~as in~\eqref{eqn:CPT_sample_alt}, or from the exchangeable
sampler given in Algorithm~\ref{alg:MC_exchangeable}).
\end{proof}

\begin{proof}[Proof of Theorem~\ref{thm:robust_lowerbd}]
For convenience we will write
\[\tv = \tv\big(Q_\star^n(\cdot|\Z),Q^n(\cdot|\Z)\big)\]
throughout this proof. First, by a standard property of the total variation distance, there exists a subset $A(\Z)\subseteq\Xcal^n$ such that
\[\Pp{Q_\star^n(\cdot|\Z)}{\X\in A(\Z)| \Z} =  \Pp{Q^n(\cdot|\Z)}{\X\in A(\Z)| \Z} + \tv.\]
Fix any $M\geq 2$, and define
\[\alpha_0(\Z) := \Pp{Q^n(\cdot|\Z)}{\X\in A(\Z)| \Z}, \quad \alpha(\Z) := \alpha_0(\Z) + 0.5\sqrt{\frac{\log(M)}{M}}.\]
Now, by definition of the setting and the CRT, we know that conditional on $\Z$, we have $\X\sim Q_\star^n(\cdot|\Z)$ and independently, $\X^{(1)},\dots,\X^{(M)}\sim Q^n(\cdot|\Z)$. Therefore, 
\[\Big(\One{\X\in A(\Z)} | \Y,\Z\Big) \sim\textnormal{Bernoulli}\Big(\alpha_0(\Z) + \tv\Big),\]
and independently,
\[\bigg(\sum_{m=1}^M \One{\X^{(m)}\in A(\Z)} | \Y,\Z\bigg) \sim \textnormal{Binomial}(M,\alpha_0(\Z)).\]

We will work with the statistic $T(\X,\Y,\Z) = \One{\X\in A(\Z)}$.  We have
\begin{align}
\notag&\PPst{p\leq \alpha(\Z)}{\Y,\Z}\\
\notag&= \PPst{\frac{1+\sum_{m=1}^M \One{T(\X^{(m)},\Y,\Z) \geq T(\X,\Y,\Z)}}{1+M}\leq \alpha(\Z)}{\Y,\Z}\\
\notag&\geq \PPst{\X\in A(\Z)\text{ and }\sum_{m=1}^M \One{\X^{(m)}\in A(\Z)} \leq \alpha(\Z)\cdot (M+1)-1}{\Y,\Z}\\
\notag&=\Big(\alpha_0(\Z) + \tv\Big) \cdot \PPst{\textnormal{Binomial}(M,\alpha_0(\Z)) \leq  \alpha(\Z)\cdot (M+1)-1}{\Z}\\
\label{eqn:bennett_step}&\geq  \alpha(\Z) + \tv -0.5\sqrt{\frac{\log(M)}{M}} -   \PPst{\textnormal{Binomial}(M,\alpha_0(\Z)) >  \alpha(\Z)\cdot (M+1)-1}{\Z},
\end{align}
where the last step holds by definition of $\alpha(\Z),\alpha_0(\Z)$, and the fact that $ \alpha_0(\Z) + \tv\leq 1$. Finally, it suffices to bound this binomial probability. By Bennett's inequality, writing $h(u) = (1+u)\log(1+u)-u$, for any $t \in [0,1]$ we have
\begin{align}
\notag&\PP{\textnormal{Binomial}(M,t) >  \left(t + 0.5\sqrt{\frac{\log(M)}{M}}\right)\cdot (M+1)-1}\\
\notag&=\PP{\textnormal{Binomial}(M,t) - Mt >  t + 0.5\sqrt{\frac{\log(M)}{M}}\cdot (M+1)-1}\\
\notag&\leq \exp\left\{ - Mt(1-t)\cdot  h\left(\frac{t + 0.5\sqrt{\frac{\log(M)}{M}}\cdot (M+1)-1}{Mt(1-t)}\right)\right\}\\
\label{eqn:binomial_h}&\leq \exp\left\{ - \frac{M}{4} h\left(\frac{0.5\sqrt{\frac{\log(M)}{M}}\cdot (M+1)-1}{M/4}\right)\right\},
\end{align}
where the last step holds since $h$ is an increasing function, while $c\mapsto c \cdot h(a/c)$ is decreasing in $c>0$, for any $a>0$, and $t(1-t)\leq 1/4$.

Finally, as $\eps\rightarrow 0$, we have $h(\eps) = \eps^2/2 + O(\epsilon^3)$, so as $M \rightarrow \infty$ we have
\begin{multline*}\exp\biggl\{ - \frac{M}{4} h\biggl(\frac{0.5\sqrt{\frac{\log(M)}{M}}\cdot (M+1)-1}{M/4}\biggr)\biggr\} = \exp\left\{-\frac{1}{2}\log(M) + o(1)\right\} \\= \frac{1}{\sqrt{M}} = o(1)\cdot  0.5\sqrt{\frac{\log(M)}{M}}.\end{multline*}
Returning to~\eqref{eqn:bennett_step}, we see that
\[\PPst{p\leq \alpha(\Z)}{\Y,\Z} \geq \alpha(\Z) + \tv -\sqrt{\frac{\log(M)}{M}}\cdot 0.5(1+o(1)).\]
More concretely, for any $M\geq 2$ we can verify numerically that the quantity in~\eqref{eqn:binomial_h} is bounded by $2\sqrt{\frac{\log(M)}{M}}$, which shows that the term $0.5(1+o(1))$ above can be replaced with $2.5$ for any $M\geq 2$.
\end{proof}

\section{Details for bikeshare data experiment}\label{app:bikeshare_details}

We will write $Z=(Z_{\textnormal{route}},Z_{\textnormal{time}})$, where the route encodes both the start and end locations and is treated as categorical. 

To estimate a conditional distribution $Q(\cdot|Z)$, we assume that $X|Z$ is normally distributed, and we fit the conditional mean and variance on the training data by grouping rides according to their route and taking a Gaussian kernel over their start time: for any $z = \big(z_{\textnormal{route}},z_{\textnormal{time}}\big)$,
\[\widehat{\mu}(z) = \sum_i \frac{ w(z,Z^{\textnormal{train}}_i)}{\sum_{i'} w(z,Z^{\textnormal{train}}_{i'})} \cdot X^{\textnormal{train}}_i, \quad \widehat{\sigma}^2(z) = \sum_i  \frac{ w(z,Z^{\textnormal{train}}_i)}{\sum_{i'} w(z,Z^{\textnormal{train}}_{i'})}\cdot (X^{\textnormal{train}}_i)^2 - \big(\widehat{\mu}(z) \big)^2,\]
where the weights are given by grouping observations by route and applying a Gaussian kernel to the time, i.e.
\[w(z,Z^{\textnormal{train}}_i) =  \One{(Z^{\textnormal{train}}_i)_{\textnormal{route}} = z_{\textnormal{route}}} \cdot \exp\big\{ - ((Z^{\textnormal{train}}_i)_{\textnormal{time}} - z_{\textnormal{time}})^2/(2h^2)\big\}\]
for a bandwidth $h$ of 20 minutes. Time of day is on a continuous 24 hour clock, that is, if $z_{\textnormal{time}}$ = 11:00pm and $(Z^{\textnormal{train}}_i)_{\textnormal{time}}$ = 1:00am then the difference between them is two hours, not 22 hours. 

Our conditional distribution estimate $Q(\cdot|Z)$ is then given by
\[(X|Z=z) \sim  \mathcal{N}\Big(\widehat{\mu}(z) , \widehat{\sigma}^2(z)\Big).\]

However, since the popularity of various routes and different times of day varies widely, there are some values $z$ where our estimate of the conditional mean and variance of $X$ is unreliable due to scarce data. To check this, for any $z$ we define
\[N(z) = \sum_i w(z,Z^{\textnormal{train}}_i) ,\]
where a larger $N(z)$ means that there are a larger number of rides in the training data that were taken along the same route $z_{\textnormal{route}}$, and at a time of day similar to $z_{\textnormal{time}}$. For the test data, we then keep only those data points $(X_i,Y_i,Z_i)$ for which $N(Z_i)\geq 20$. Since this screening step uses the value of $Z_i$ but not the value of $X_i$, the $X_i$'s are still unobserved even after screening, and their distribution conditional on $Z_i$ is unchanged; therefore the CPT and CRT tests are valid even on this screened data.

\subsection*{Acknowledgements}
R.F.B.~was partially supported by the NSF via grant DMS-1654076 and by an Alfred~P.~Sloan fellowship. T.B.B.~and R.J.S.~were supported by an EPSRC Programme grant.  R.J.S.~was also supported by an EPSRC Fellowship and a grant from the Leverhulme Trust. The authors would like to thanks the Isaac Newton Institute for Mathematical Sciences for its hospitality during the programme Statistical Scalability which was supported by EPSRC Grant Number: LNAG/036, RG91310. The authors thank Samir Khan for help implementing code for our algorithms.

\bibliographystyle{plainnat}
\bibliography{bib}

\end{document}